\newtheorem{theorem}{Theorem}
\newtheorem{corollary}{Corollary}
\newtheorem{definition}{Definition}
\theoremstyle{remark}
\newcommand{\Rmnum}[1]{\expandafter\@slowromancap\romannumeral #1@}
\begin{document}

\title{Stochastic Ordering of Interference in Large-scale Wireless Networks}
\author{Junghoon Lee and Cihan Tepedelenlio\u{g}lu, \emph{Member, IEEE}
\thanks{The authors are with the School of Electrical, Computer, and Energy Engineering, Arizona
State University, Tempe, AZ 85287, USA. (Email:\{junghoon.lee,cihan\}@asu.edu).} }
\maketitle

\begin{abstract}
Stochastic orders are binary relations defined on probability distributions which capture intuitive notions like
being larger or being more variable. This paper introduces stochastic ordering of interference distributions in large-scale networks modeled as point processes. Interference is a major performance-limiting factor in most wireless networks, thus it is important to characterize its statistics. Since closed-form results for the distribution of interference for such networks are only available in limited cases, it is of interest to compare network interference using stochastic orders, for two different point processes with different fading or path-loss scenarios between the interferers and the receiver. In this paper, conditions on the fading distribution and path-loss model are given to establish stochastic ordering between interferences. Moreover, Laplace functional ordering is defined between point processes and applied for comparing interference. Monte-Carlo simulations are used to supplement our analytical results.

\end{abstract}

\begin{keywords}
Interference, point process, stochastic order.
\end{keywords}

\section{Introduction}
Since interference is the main performance-limiting factor in most wireless networks, it is crucial to characterize its statistics. The interference mainly depends on the fading channel (interfering power distribution), the path-loss model (signal attenuation with distance), and network geometry (spatial distribution of concurrently transmitting nodes). The spatial location of the interferers can be modeled either deterministically or stochastically. Deterministic models include square, triangular, and hexagonal lattices \cite{Silvester1983, Mathar1995}, which are applicable when the location of the nodes in the network is constrained to a regular structure. On the other hand, only a statistical description of the location of the nodes is available in some scenarios. In both cases, the locations of transmitting nodes in the network are seen as the realizations of some point processes \cite{Stoyan1995, Baccelli2009a, Haenggi2008}. For certain classes of node distributions, most notably Poisson point processes, and certain attenuation laws, closed-form results are available for the interference distribution which determine the network performance \cite{Ilow1998, Win2009} (and the references therein). However the interference distribution is not tractable in most other cases.

Successful transmission probability in the presence of interference can be calculated by determining the Laplace transform of interference \cite{Linnartz1992, Baccelli2006, Haenggi2008}. However, closed-form expressions for the Laplace transform of interference are not tractable in many cases. We approach this problem from a stochastic ordering perspective, which is a partial order on random variables \cite{Shaked, Shaked2}. Concepts of stochastic ordering have been applied to scenarios of interest in wireless communications in \cite{Tepedelenlioglu2011}. We will use these concepts to understand the interference in large scale wireless networks. In \cite{Shaked2, Ross1978, Blaszczyszyn2009}, the application of this set of tools in communication networks can be found. In \cite{Shaked2, Ross1978}, the stochastic ordering has been used in studying a class of queueing networks. Directionally convex ordering of different point processes and its integral shot noise fields which are inherent from the point processes has been studied in \cite{Blaszczyszyn2009}. To the best of our knowledge, there is no study of conditions on fading channels and path-loss models for stochastic ordering of network interference in the literature. In this paper, we use stochastic ordering theory to compare network performance under conditions of differing fading on the interference link, and different path-loss models for the establishment of stochastic ordering of interference from different point processes. Using the conditions, we compare performance without having to obtain closed-form results for a wide range of performance metrics. We also compare different point processes which are commonly used in the literature using stochastic orders, and advocate Laplace functional ordering of point processes over directional convex ordering when interferences due to these point processes are compared.

The paper is organized as follows: In Section \ref{sec:Math_Prelim_and_System_Model}, we introduce mathematical preliminaries and present the system model and assumptions. Section \ref{sec:Ordering_of_Perform_Metrics} introduces performance metrics involving stochastic ordering of interference. In Section \ref{sec:Comparison of Fading_Channels} and \ref{sec:Comparison of Path-loss Models}, we derive the conditions on fading channel and path-loss model for stochastic ordering of interference respectively. Section VI defines Laplace functional ordering and compares different point processes. Section \ref{sec:Numerical_Results} presents results from numerical simulations to corroborate our claims. Finally, the paper is summarized in Section \ref{sec:Summary}.

\section{Mathematical Preliminaries and System Model}
\label{sec:Math_Prelim_and_System_Model}
Here we give a brief overview of some terminology and mathematical tools and introduce the system model assumed in this paper.
\subsection{Stochastic Ordering}
First we briefly review some common stochastic orders between random variables, which can be found in \cite{Shaked, Shaked2}.
\subsubsection{Usual Stochastic Ordering}
\label{subsubsec:Usual_Transform_Ordering}
Let $X$ and $Y$ be two random variables (RVs) such that
\begin{equation}
\label{eqn:def_st_ordering}
P\left( X>x \right) \leq P\left( Y>x \right), -\infty < x < \infty.
\end{equation}
Then $X$ is said to be smaller than $Y$ in the \emph{usual stochastic order} (denoted by $X \leq_{\mathrm{st}} Y$). Roughly speaking, \eqref{eqn:def_st_ordering} says that $X$ is less likely than $Y$ to take on large values. To see the interpretation of this in the context of wireless communications, when $X$ and $Y$ are distributions of instantaneous SNRs due to fading, \eqref{eqn:def_st_ordering} is a comparison of outage probabilities. Since $X, Y$ are positive in this case, $x\in\mathbb{R}^+$ is sufficient in \eqref{eqn:def_st_ordering}.

\subsubsection{Laplace Transform Ordering}
\label{subsubsec:Laplace_Transform_Ordering}
Let $X$ and $Y$ be two non-negative random variables such that
\begin{equation}
\label{eqn:def_Lt_ordering}
\mathcal{L}_X(s):=\mathbb{E}[\exp{(-sX)}] \geq \mathbb{E}[\exp{(-sY)}]=\mathcal{L}_Y(s) \text{ for } s>0.
\end{equation}
Then $X$ is said to be smaller than $Y$ in the \emph{Laplace transform} (LT) \emph{order} (denoted by $X \leq_{\mathrm{Lt}} Y$). For example, when $X$ and $Y$ are the instantaneous SNR distributions of a fading channel, \eqref{eqn:def_Lt_ordering} can be interpreted as a comparison of average bit error rates for exponentially decaying instantaneous error rates (as in the case for differential-PSK (DPSK) modulation and Chernoff bounds for other modulations) \cite{Tepedelenlioglu2011}. The LT order $X \leq_{\mathrm{Lt}} Y$ is equivalent to
\begin{equation}
\label{eqn:def_Lt_ordering_conseq1}
\mathbb{E}[l(X)] \geq \mathbb{E}[l(Y)],
\end{equation}
for all \emph{completely monotonic} (c.m.) functions $l(\cdot)$ \cite[pp. 96]{Shaked2}. By definition, the derivatives of a c.m. function $l(x)$ alternate in sign: $(-1)^n \mathrm{d}^n l(x)/\mathrm{d}x^n \geq 0$, for $n=0,1,2,\dots$, and $x \geq 0$. An equivalent definition is that c.m. functions are positive mixtures of decaying exponentials \cite{Shaked2}. A similar result to \eqref{eqn:def_Lt_ordering_conseq1} with a reversal in the inequality states that
\begin{equation}
\label{eqn:def_Lt_ordering_conseq2}
X \leq_{\mathrm{Lt}} Y \Longleftrightarrow \mathbb{E}[l(X)] \leq \mathbb{E}[l(Y)],
\end{equation}
for all $l(\cdot)$ that have a completely monotonic derivative (c.m.d.). Finally, note that $X \leq_{\mathrm{st}} Y \Rightarrow$ \\ $X \leq_{\mathrm{Lt}} Y$. This can be shown by invoking the fact that $X \leq_{\mathrm{st}} Y$ is equivalent to $\mathbb{E}[l(X)] \leq \mathbb{E}[l(Y)]$ whenever $l(\cdot)$ is an increasing function \cite{Shaked2}, and that c.m.d. functions in \eqref{eqn:def_Lt_ordering_conseq2} are increasing.

\subsection{Point Processes}
\label{Point_Processes}
Point processes have been used to model large-scale networks \cite{Karr1991, Stoyan1995, Haenggi2008, Win2009, Haenggi2009b, Tresch2010, Gulati2010, Haenggi2012, Haas2003, Garetto2011}. In this paper, we focus on stationary and isotropic point processes. A point process $\Phi$ is stationary if its distribution is invariant to translations and is isotropic if its distribution is invariant to rotations. In what follows, we introduce some fundamental notions that will be useful.

\subsubsection{Campbell's Theorem \normalfont{\cite{Stoyan1995, Haenggi2008}}}
\label{sec:Campbell_Theorem}
It is often necessary to evaluate the average sum of a function evaluated at the point of the process $\Phi$. Campbell's theorem helps in evaluating such sums. For any non-negative measurable function $u$,
\begin{equation}
\label{eqn:Campbell_Theorem_PP}
\mathbb{E}\left[\sum_{x\in\Phi}u(x)\right]=\int_{\mathbb{R}^d}u(x)\Lambda(\mathrm{d}x) \text{.}
\end{equation}
The intensity measure $\Lambda$ of $\Phi$ in \eqref{eqn:Campbell_Theorem_PP} is a characteristic analogous to the mean of a real-valued random variable and defined as $\Lambda(B)=\mathbb{E}\left[\Phi(B)\right]$ for bounded subsets $B \subset \mathbb{R}^d$. So $\Lambda(B)$ is the mean number of points in $B$. If $\Phi$ is stationary then the intensity measure simplifies as $\Lambda(B)=\lambda \vert B \vert$ for some non-negative constant $\lambda$, which is called the intensity of $\Phi$, where $\vert B \vert$ denotes the $d$ dimensional volume of $B$. For stationary point processes, the right side in \eqref{eqn:Campbell_Theorem_PP} is equal to $\lambda\int_{\mathbb{R}^d}u(x)\mathrm{d}x$. Therefore, any two stationary point processes with same intensity lead to equal average sum of a function (when the mean value exists).

\subsubsection{Laplace Functional of Point Processes \normalfont{\cite{Baccelli2009a}}}
The Laplace functional $L$ of point process $\Phi$ is defined by the following formula
\begin{equation}
\label{eqn:Def_Lf_of_PP}
L_{\Phi}(u) \triangleq \mathbb{E}\left[e^{-\sum_{x\in\Phi}u(x)}\right]=\mathbb{E}\left[e^{-\int_{\mathbb{R}^d}u(x)\Phi(\mathrm{d}x)}\right]
\end{equation}
where $u(\cdot)$ runs over the set $\mathscr{U}$ of all non-negative functions on $\mathbb{R}^d$. The Laplace functional completely characterizes the distribution of the point process. As an important example, the Laplace functional $L$ of stationary Poisson point process $\Phi_{\normalfont{\text{PPP}}}$ is
\begin{equation}
\label{eqn:Def_Lf_of_PPP}
L_{\Phi_{\normalfont{\text{PPP}}}}(u) = \exp\left\{-\lambda\int_{\mathbb{R}^d}\left[1-\exp(-u(x))\right]\mathrm{d} x\right\}
\end{equation}
where $\lambda$ is the intensity.

\subsection{System Model}
\label{subsec:System_Model}
As shown in Fig. \ref{fig:system_model}, we assume a transmit/receive pair communicating over a wireless channel. The receiver is being interfered by interference sources distributed as stationary and isotropic point process. The point process describes all interfering nodes \cite{Gulati2010}. Both the transmitter and receiver are fixed and are not considered a part of the point process. The accumulated interference to the receiver at the origin is of interest to quantify and it is given by
\begin{equation}
\label{eqn:Interference_model}
I=\sum_{x\in\Phi}h_I^{(x)} g(\Vert x \Vert)
\end{equation}
where $\Phi$ denotes the set of all interfering nodes which is modeled as a point process on $\mathbb{R}^d$ and $h_I^{(x)}$ is a positive random variable capturing the (power) fading coefficient between the receiver and the $x^{\text{th}}$ interfering node. Here, typically $d=2$ or $d=3$, though this assumption is not necessary. The spatial region containing the interferers is assumed to be an infinite area \cite{Baccelli2009a, Haenggi2009b, Haenggi2008, Win2009}. Moreover, $\{h_I^{(x)}\}_x$ are i.i.d. random variables and independent of the point process. The path-loss is captured by a function $g(\cdot):\mathbb{R}^{+}\rightarrow\mathbb{R}^{+}$ which is a continuous, positive, non-increasing function of $\Vert x \Vert$ and assumed to depend only on the Euclidean distance $\Vert x \Vert$ from the node $x$ to the receiver at the origin. We consider the following general path-loss model \cite{Ilow1998, Haenggi2008, Win2009, Baccelli2009a}:
\begin{equation}
\label{eqn:path_loss_model}
g(\Vert x \Vert)=(a+b \Vert x \Vert^{\delta})^{-1}
\end{equation}
for some $b > 0$, $\delta > d$ and $a \in \lbrace 0,1 \rbrace$, where $\delta$ is called the path-loss exponent, $a$ determines whether the path-loss model belongs to a singular path-loss model ($a=0$) or a non-singular path-loss model ($a=1$), and $b$ is a compensation parameter to keep the total receive power normalized which will be discussed in detail in Section \ref{sec:Comparison of Path-loss Models}. It is noted that from Campbell's theorem in Section \ref{sec:Campbell_Theorem} and the aggregated interference model in \eqref{eqn:Interference_model}, any two stationary point processes with the same intensity have equal mean power of interferences (when the expectation of \eqref{eqn:Interference_model} exists).

Assuming the effective channel power between the desired receiver and its transmitter is $h_S$ which can be a non-negative constant or a RV, the signal to interference ratio (SIR) is given by
\begin{equation}
\label{eqn:SIR_def}
\emph{SIR} = \frac{h_S}{I}
\end{equation}
where $I$ is the interference power given by \eqref{eqn:Interference_model}. If there is additive noise with power $W$ which can be a non-negative constant or RV as well, then the signal to interference plus noise ratio (SINR) is given by \cite{Haenggi2008, Haenggi2009a, Baccelli2009a}
\begin{equation}
\label{eqn:SINR_def}
\emph{SINR} = \frac{h_S}{W+I} \text{.}
\end{equation}
In this paper, the LT ordering of two interference distributions will be mainly discussed. The Laplace transform of interference plus noise is $\mathbb{E}[\exp{(-s(W+I))}]=\mathcal{L}_{W}(s)\mathcal{L}_{I}(s)$. Clearly the LT ordering of two interference distributions is not affected by a common noise power $W$. Therefore, we focus on interference distributions hereafter.

\section{Ordering of Outage Probability and Ergodic Capacity Metric}
\label{sec:Ordering_of_Perform_Metrics}
In this section we introduce performance metrics involving the stochastic ordering of interference. Firstly, we study the SIR-based outage probability. It has been shown in \cite{Haenggi2008} that when $h_S$ is exponentially distributed, the LT ordering of the interference can be related to the outage defined in terms of SINR. We generalize this result to a broader class of distributions for the effective channel RV $h_S$.
\begin{theorem}
\label{suffi_condi_st_Rayleigh}
Let $I_{1}$ and $I_{2}$ denote the interferences from point process $\Phi_1$ and $\Phi_2$ respectively. Also, let $h_S$ be the effective fading channel between the desired receiver and its transmitter and has a CCDF $\bar{F}_{h_S}(x):=1-F_{h_S}(x)$, which is a c.m. function. Under these assumptions, if $I_{1} \leq_{\mathrm{Lt}} I_{2}$, then $\text{SIR}_2 \leq_{\mathrm{st}} \text{SIR}_1$.
\end{theorem}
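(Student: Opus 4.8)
The plan is to reduce the usual-stochastic-order conclusion to the Laplace-transform hypothesis by expressing the complementary CDF of the SIR as an expectation of a completely monotonic function of the interference. First I would fix $x>0$ (the case $x\le 0$ is immediate, since the SIRs are non-negative so both tail probabilities equal one there). Using \eqref{eqn:SIR_def} together with the independence of the desired-link fading $h_S$ from the interference $I_i$ (which follows from the model assumption that the fading coefficients and the point process are independent), I would condition on $I_i$ to obtain
\[
P(\text{SIR}_i > x) = P(h_S > x\,I_i) = \mathbb{E}\!\left[\bar{F}_{h_S}(x\,I_i)\right], \qquad i=1,2.
\]
Establishing $\text{SIR}_2 \leq_{\mathrm{st}} \text{SIR}_1$, i.e. $P(\text{SIR}_2>x)\le P(\text{SIR}_1>x)$ for all $x$ per \eqref{eqn:def_st_ordering}, therefore amounts to proving $\mathbb{E}[\bar{F}_{h_S}(x I_1)] \ge \mathbb{E}[\bar{F}_{h_S}(x I_2)]$.

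Second, I would verify that the integrand, viewed as a function of the interference, is completely monotonic. Define $l_x(t) := \bar{F}_{h_S}(x t)$ for $t\ge 0$. Since $\bar{F}_{h_S}$ is c.m. by hypothesis and $x>0$, the chain rule gives $(-1)^n \mathrm{d}^n l_x(t)/\mathrm{d}t^n = x^n (-1)^n \bar{F}_{h_S}^{(n)}(x t) \ge 0$ for every $n=0,1,2,\dots$, because $x^n>0$ and $(-1)^n \bar{F}_{h_S}^{(n)} \ge 0$. Hence a positive linear rescaling of the argument preserves complete monotonicity, and $l_x(\cdot)$ is c.m.

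Third, I would invoke the characterization \eqref{eqn:def_Lt_ordering_conseq1}: the hypothesis $I_1 \leq_{\mathrm{Lt}} I_2$ is equivalent to $\mathbb{E}[l(I_1)] \ge \mathbb{E}[l(I_2)]$ for every c.m. function $l(\cdot)$. Applying this with $l=l_x$ yields exactly $\mathbb{E}[\bar{F}_{h_S}(x I_1)] \ge \mathbb{E}[\bar{F}_{h_S}(x I_2)]$, which is the desired tail inequality; since $x>0$ was arbitrary, the proof is complete. The only non-routine step is the observation in the second paragraph that rescaling preserves complete monotonicity, so that the outage probability of the \emph{ratio} $h_S/I$ can be recast as the expectation of a c.m. function of $I$ alone; once that is in place, the result is a direct application of the LT-order characterization. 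The remaining points to watch are the explicit use of independence between $h_S$ and $\Phi$ (hence $I$) in the conditioning, and the trivial handling of non-positive $x$.
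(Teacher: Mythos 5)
Your proposal is correct and follows essentially the same route as the paper's proof: both express the SIR tail probability as $\mathbb{E}[\bar{F}_{h_S}(x I_i)]$ by conditioning on the interference and then apply the c.m.\ characterization \eqref{eqn:def_Lt_ordering_conseq1} of the LT order. Your explicit verification that positive rescaling of the argument preserves complete monotonicity (and your handling of $x \le 0$) fills in details the paper leaves implicit, but the underlying argument is identical.
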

\begin{proof}
Let $\bar{F}_{h_S}(x)$ be the c.m. CCDF of $h_S$. Then we have
\begin{eqnarray}
\lefteqn{P(\emph{\text{SIR}}_1>x) = \mathbb{E}_{I_{1}}\left[P(h_S>xI_{1})\right] = \mathbb{E}_{I_{1}}\left[\bar{F}_{h_S}(xI_{1})\right]} \\
&\geq& \mathbb{E}_{I_{2}}\left[\bar{F}_{h_S}(xI_{2})\right] = \mathbb{E}_{I_{2}}\left[P(h_S>xI_{2})\right] = P(\emph{\text{SIR}}_2>x) \text{,}
\end{eqnarray}
where the inequality is due to $I_{1} \leq_{\mathrm{Lt}} I_{2}$ and equation \eqref{eqn:def_Lt_ordering_conseq1}. Recalling the definition of $\leq_{\mathrm{st}}$ in \eqref{eqn:def_st_ordering}, Theorem \ref{suffi_condi_st_Rayleigh} is proved.
\end{proof}
Theorem \ref{suffi_condi_st_Rayleigh} enables us to conclude the usual stochastic ordering of SIRs whenever LT ordering of interferences are established. In the sequel, we will show many examples of how LT ordering between two interference distributions can be established which can be used with Theorem \ref{suffi_condi_st_Rayleigh} to establish the ``usual stochastic" ordering of outage as in \eqref{eqn:def_st_ordering}. Illustrations of Theorem \ref{suffi_condi_st_Rayleigh} are shown in our numerical results in Section \ref{sec:Numerical_Results}. Note that if two interferences have the usual stochastic ordering $I_{1} \leq_{\mathrm{st}} I_{2}$ (a stronger assumption than in Theorem \ref{suffi_condi_st_Rayleigh}), then we can have $\emph{\text{SIR}}_2 \leq_{\mathrm{st}} \emph{\text{SIR}}_1$ with any arbitrary distribution for $h_S$ and not just those with a c.m. CCDF.

Now consider the expression
\begin{equation}
\label{eqn:Capacity_def}
C=\mathbb{E}\left[\log_2\left(1+\frac{h_S}{W+I}\right)\right] \text{,}
\end{equation}
which is the bandwidth-normalized capacity in the weak interference regime \cite{Haenggi2008, Shang2009, Haenggi2009a}. The expectation in \eqref{eqn:Capacity_def} is with respect to the arbitrary positive random variable $h_S$, as well as the interference $I$ given in \eqref{eqn:Interference_model}. The noise power $W \geq 0$ is fixed, and $h_S, I$ are independent.

We would like to compare two regimes ($h_{S_1}, I_1$) with capacity $C_1$, and ($h_{S_2}, I_2$) with capacity $C_2$, where $h_{S_2} \leq_{\mathrm{Lt}} h_{S_1}$, and $I_{1} \leq_{\mathrm{Lt}} I_{2}$. Since $\log(1+h_S)$ is c.m.d. with respect to $h_S$ and $\log(1+1/(W+I))$ is c.m. with respect to $I$, by using \eqref{eqn:def_Lt_ordering_conseq1} and \eqref{eqn:def_Lt_ordering_conseq2} successively, we have $C_1 \geq C_2$. Therefore the LT ordering of interferences and the LT ordering of effective channels together lead to ordering of ergodic capacities. Unlike in Theorem \ref{suffi_condi_st_Rayleigh}, in this case, we can compare the ergodic capacity \emph{regardless of the distribution} of fading channel, $h_S$. Section \ref{Parametric Fading on the Interference Link} has examples of fading distributions that are LT ordered.

Having emphasized the impact of LT ordering on outage and capacity metrics, in what follows, we will investigate the conditions for the presence of LT ordering of interference distributions. The three factors which affect the interference distribution are the fading channel from the interfering nodes to the receiver $h_I^{(x)}$, the path-loss model $g(\cdot)$, and the point process of interfering nodes $\Phi$ as shown in the interference model in \eqref{eqn:Interference_model}. We derive the conditions on fading channels and path-loss models and the underlying point processes for the LT ordering of interferences. We also identify the LT ordering of interference distributions in commonly used point processes.

\section{Comparison of Fading Channels on the Interference Link}
\label{sec:Comparison of Fading_Channels}
In the previous section we mentioned that the effective channels between the transmitter and receiver, $h_{S_1}$ and $h_{S_2}$, can be compared. In this section, we find the conditions on the distribution of $h_I^{(x)}$ on the \emph{interference link} for the interference in \eqref{eqn:Interference_model} to be LT ordered. Since $h_I^{(x)}$ are assumed as i.i.d., we will drop the node index $x$ for convenience hereafter, when we refer to its distribution.
\begin{theorem}
\label{Lt_ordering_w_fading_for_GPP}
Let $I_1$ and $I_2$ denote the interferences both with the same path-loss model $g(\Vert x \Vert)$, from a stationary point process $\Phi$ as in \eqref{eqn:Interference_model}. Also, let $h_{I_1}$ and $h_{I_2}$ be RVs whose distributions capture the fading channels between the receiver and interferers under the two scenarios that are compared. Under these assumptions, if $h_{I_1} \leq_{\mathrm{Lt}} h_{I_2}$, then $I_1 \leq_{\mathrm{Lt}} I_2$.
\end{theorem}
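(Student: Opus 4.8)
The plan is to verify the defining inequality of the Laplace transform order directly, namely $\mathcal{L}_{I_1}(s) \geq \mathcal{L}_{I_2}(s)$ for every $s > 0$, by conditioning on the point process $\Phi$. Fix $s>0$. Because the fading coefficients $\{h_I^{(x)}\}_x$ are i.i.d.\ and independent of $\Phi$, conditioning on a realization of $\Phi$ renders the summands $h_I^{(x)} g(\Vert x \Vert)$ in \eqref{eqn:Interference_model} conditionally independent, one per point, so the conditional Laplace transform factorizes as
\[
\mathbb{E}\left[e^{-sI}\mid\Phi\right]=\prod_{x\in\Phi}\mathbb{E}\left[e^{-s h_I^{(x)} g(\Vert x\Vert)}\mid\Phi\right]=\prod_{x\in\Phi}\mathcal{L}_{h_I}\!\left(s\,g(\Vert x\Vert)\right),
\]
where $\mathcal{L}_{h_I}$ is the common Laplace transform of the fading RV.

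The key step is a term-by-term comparison of these products. Since $s>0$ and $g(\cdot)>0$, each argument $s\,g(\Vert x\Vert)$ is strictly positive, so the hypothesis $h_{I_1}\leq_{\mathrm{Lt}} h_{I_2}$ gives $\mathcal{L}_{h_{I_1}}(s\,g(\Vert x\Vert))\geq \mathcal{L}_{h_{I_2}}(s\,g(\Vert x\Vert))\geq 0$ at every point $x$. As each factor of a Laplace transform evaluated at a positive argument lies in $(0,1]$, all factors are non-negative, and domination of each factor forces domination of the product; hence $\mathbb{E}[e^{-sI_1}\mid\Phi]\geq\mathbb{E}[e^{-sI_2}\mid\Phi]$ for every realization of $\Phi$. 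Taking expectation over $\Phi$ and using monotonicity of expectation yields $\mathcal{L}_{I_1}(s)\geq\mathcal{L}_{I_2}(s)$ for all $s>0$, which is exactly $I_1\leq_{\mathrm{Lt}} I_2$.

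The same argument can be phrased through the Laplace functional of Section \ref{Point_Processes}. Setting $u_j(x):=-\log\mathcal{L}_{h_{I_j}}(s\,g(\Vert x\Vert))$, each $u_j$ is a non-negative function, i.e.\ an element of $\mathscr{U}$, precisely because the inner Laplace transform lies in $(0,1]$; by \eqref{eqn:Def_Lf_of_PP} one then has $\mathcal{L}_{I_j}(s)=L_{\Phi}(u_j)$. The hypothesis gives $u_1(x)\leq u_2(x)$ pointwise, and since $L_{\Phi}(u)=\mathbb{E}[e^{-\sum_x u(x)}]$ is monotone non-increasing in $u$, we again obtain $L_{\Phi}(u_1)\geq L_{\Phi}(u_2)$.

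I expect the only technical care to concern the factorization over a possibly infinite configuration of points and the accompanying product-domination: the product may contain infinitely many factors and can vanish, but since every factor is non-negative and $0\le\mathcal{L}_{h_{I_2}}\le\mathcal{L}_{h_{I_1}}$ holds term by term, the inequality survives the limit of partial products, and the interchange of the expectation over the fading with the expectation over $\Phi$ is legitimate by the assumed independence and non-negativity (Tonelli). It is worth noting that the argument uses neither a closed form for $\mathcal{L}_{h_I}$ nor any special structure of $\Phi$—in particular stationarity is not actually needed here—so the conclusion extends to an arbitrary point process.
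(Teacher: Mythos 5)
Your proposal is correct. It follows the same high-level strategy as the paper's Appendix A --- condition on the point-process side, reduce to the LT ordering of the fading variables, then average --- but the execution is genuinely different, and yours is the more careful of the two. The paper writes $\mathcal{L}_{I}(s)=\mathbb{E}\left[e^{-sh_I\int_{\mathbb{R}^d}g(\Vert x \Vert)\Phi(\mathrm{d}x)}\right]$, conditions on the scalar $Z=\int_{\mathbb{R}^d}g(\Vert x \Vert)\Phi(\mathrm{d}x)$, and invokes the mixture-closure property of the LT order (Theorem 5.A.7(b) of Shaked and Shanthikumar) to reduce the claim to $\mathbb{E}[e^{-szh_{I_1}}]\geq\mathbb{E}[e^{-szh_{I_2}}]$. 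Read literally, pulling a single $h_I$ outside the integral treats the fading as one random variable common to all interferers; under the i.i.d.-marks model of \eqref{eqn:Interference_model} one has $e^{-sI}=\prod_{x\in\Phi}e^{-sh_I^{(x)}g(\Vert x\Vert)}$, which is not $e^{-sh_IZ}$, and the conditional Laplace transform of $I$ given $Z=z$ is not $\mathcal{L}_{h_I}(sz)$. Your proof avoids this by conditioning on the full configuration $\Phi$, factorizing the conditional Laplace transform into $\prod_{x\in\Phi}\mathcal{L}_{h_I}\left(sg(\Vert x\Vert)\right)$ --- exactly what independence of the i.i.d. marks from $\Phi$ licenses --- and concluding by term-by-term domination of products whose factors lie in $(0,1]$, then averaging over $\Phi$. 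Your Laplace-functional rephrasing with $u_j(x)=-\log\mathcal{L}_{h_{I_j}}\left(sg(\Vert x\Vert)\right)$ is the cleanest packaging of this computation, and it is also what makes Theorem \ref{Lt_ordering_interference_from_PP} immediate. Your technical caveats are handled correctly (all factors are bounded by $1$, so the infinite products and the interchange of expectations are unproblematic), and your final remark is right: stationarity of $\Phi$ is never used, so the conclusion holds for an arbitrary point process with i.i.d. marks. What the paper's route buys is brevity, by outsourcing the work to a known closure theorem; what yours buys is a self-contained argument that actually matches the stated i.i.d. fading model --- to make the paper's proof rigorous under that model one must either condition on $\Phi$ as you do, or supplement the mixture step with closure of the LT order under sums of independent random variables.
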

\begin{proof}
The proof is shown in Appendix A.
\end{proof}
The interference in a stationary point process depends on the fading channel between the receiver and interferers. Intuitively a bigger LT ordering indicates a fading channel that is more like AWGN (i.e., ``less fading''). Therefore, the interference arising from it indicates interference adding up more coherently, giving rise to a ``bigger" interference in the LT ordering sense.

As mentioned in Section \ref{subsec:System_Model}, we assumed the interferers are distributed on an infinite region due to the stationarity of the point process. However, receivers in many wireless networks may experience interference from finite-area regions \cite{Salbaroli2009}. Theorem \ref{Lt_ordering_w_fading_for_GPP} is also valid for the interference in finite area. Moreover, when guard zones around the receiver occur due to sophisticated MAC protocols \cite{Hasan2007, Baccelli2009b}, Theorem 2 still holds.

\subsection{Parametric Fading on the Interference Link}
\label{Parametric Fading on the Interference Link}
In this section, we show the interference distributions are monotonic in line of sight (LoS) parameter of the fading channels with respect to the LT order for commonly used parametric fading distributions such as Nakagami-$m$ and Ricean fading. By Theorem \ref{Lt_ordering_w_fading_for_GPP} this implies LT ordering of interferences, which, by Theorem \ref{suffi_condi_st_Rayleigh} establishes stochastic ordering of SIRs.

Consider Nakagami fading model, where the envelope $\sqrt{h_I}$ is Nakagami and the effective channel $h_I$ follows the distribution
\begin{equation}
\label{eqn:Gamma_dist}
f_{h_I}(x)=\frac{m^m}{\Gamma(m)}x^{m-1}\exp(-mx),\text{ } x \geq 0 \text{.}
\end{equation}
Since $\mathbb{E}[-sh_I]=(1+s/m)^{-m}$ is a decreasing function of $m$ for each $s$, it follows that if the $m$ parameters of two channel distributions satisfy $m_1 \leq m_2$ then, $h_{I_1} \leq_{\mathrm{Lt}} h_{I_2}$ where $h_{I_1}$ and $h_{I_2}$ have normalized Gamma distributions with parameter $m_1$ and $m_2$ respectively. From Theorem \ref{Lt_ordering_w_fading_for_GPP}, it follows that $I_1 \leq_{\mathrm{Lt}} I_2$, if $m_1 \leq m_2$.

Similarly, the envelope $\sqrt{h_I}$ is Ricean and the distribution of effective channel $h_I$ is given by
\begin{equation}
\label{eqn:Ricean_effec_chan}
f_{h_I}(x)=(K+1)\exp\left[-(K+1)x-K\right]I_0(2\sqrt{K(K+1)x}),\text{ } x \geq 0 \text{,}
\end{equation}
where $K$ is the LoS parameter of Ricean fading channel and $I_0(t):=\sum_{m=0}^{\infty}(t/2)^{2m}/(m!\Gamma(m+1))$ is the modified Bessel function of the first kind of order zero. The Laplace transform of \eqref{eqn:Ricean_effec_chan} decreases with $K$ for all $s \geq 0$. Thus, similar to the Nakagami case, if $K_1 \leq K_2$ are the Ricean parameters of two channels, then $h_{I_1} \leq_{\mathrm{Lt}} h_{I_2}$ which by Theorem \ref{Lt_ordering_w_fading_for_GPP} imply that $I_1 \leq_{\mathrm{Lt}} I_2$.

In addition to the Nakagami-$m$ and Ricean fading cases, it can be shown through a procedure similar to the discussion that the interference distribution corresponding to Nakagami-$q$ (Hoyt) fading \cite{Simon} also satisfies LT ordering with respect to the shape parameter.

\subsection{Interference in Combined Multipath Fading and Shadowing}
The effect of shadow fading on the interference power distribution can be modeled as a product of a shadowing random variable with a multipath fading random variable. Let $h_{I_1} \leq_{\mathrm{Lt}} h_{I_2}$ be the two effective multipath fading distributions, and $X_1 \leq_{\mathrm{Lt}} X_2$ be the two shadowing distributions. Then, from \cite[Theorem 5.A.7 (d)]{Shaked}, it follows that the composite RV satisfies $h_{I_1}X_1 \leq_{\mathrm{Lt}} h_{I_2}X_2$, since $l(x,y) = xy$ has a c.m. derivative in each variable. We conclude that if $h_{I_1} \leq_{\mathrm{Lt}} h_{I_2}$ and $X_1 \leq_{\mathrm{Lt}} X_2$, then $I_1 \leq_{\mathrm{Lt}} I_2$ from Theorem \ref{Lt_ordering_w_fading_for_GPP}. Such conclusions are especially useful even when the composite distribution of $h_{I_1}X_1$ or $h_{I_2}X_2$ cannot be written in closed-form.

\section{Comparison of Path-loss Models}
\label{sec:Comparison of Path-loss Models}
Here, we show the ordering of mean power of interferences in a stationary point process with different path-loss models. Generally, the mean power of interference is an important factor to determine a system performance. For example, by considering the mean power of interferences as noise power, the average error rate can be approximated. Thus, we are interested in comparison the mean power of interferences. In what follows, we show the condition on the mean power of interferences with non-singular path-loss models with different path-loss exponents under which this ordering holds:
\begin{theorem}
\label{mean_ordering_with_non_singular_path_loss_for_GPP}
Let $I_1$ and $I_2$ denote the interferences both with identical fading distribution $h_I$ in a stationary point process $\Phi$. Also, let $g_1(\Vert x \Vert)$ and $g_2(\Vert x \Vert)$ be the non-singular path-loss models with $a=1$ and $b=1$ in \eqref{eqn:path_loss_model} and with different path-loss exponents $\delta_1$ and $\delta_2$, respectively. If $\delta_1 \leq \delta_2$, then $\mathbb{E}[I_{2}] \leq \mathbb{E}[I_{1}]$.
\end{theorem}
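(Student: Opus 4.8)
The plan is to compute both mean interference powers explicitly using Campbell's theorem and then compare them directly, since the statement concerns ordinary expectations rather than a stochastic order requiring monotone test functions. First I would apply Campbell's theorem, as stated in \eqref{eqn:Campbell_Theorem_PP}, to the interference sum in \eqref{eqn:Interference_model}. Because $\{h_I^{(x)}\}$ are i.i.d.\ and independent of the point process $\Phi$, and $\Phi$ is stationary with intensity $\lambda$, I would take the expectation inside the sum and factor out $\mathbb{E}[h_I]$, giving
\begin{equation}
\mathbb{E}[I_i] = \mathbb{E}[h_I]\,\lambda \int_{\mathbb{R}^d} g_i(\Vert x \Vert)\,\mathrm{d}x,\quad i=1,2.
\end{equation}
Since the common factor $\mathbb{E}[h_I]\,\lambda$ is identical and nonnegative in both regimes, the comparison of $\mathbb{E}[I_2]$ and $\mathbb{E}[I_1]$ reduces to comparing the two deterministic integrals $\int_{\mathbb{R}^d} g_i(\Vert x \Vert)\,\mathrm{d}x$.

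Next I would convert the integral to polar/radial form. Writing $r = \Vert x \Vert$ and using the surface area $c_d$ of the unit sphere in $\mathbb{R}^d$, the integral becomes $c_d \int_0^\infty g_i(r)\, r^{d-1}\,\mathrm{d}r$, with $g_i(r) = (1 + r^{\delta_i})^{-1}$ under the non-singular model $a=b=1$. The problem is therefore to show that $\delta_1 \le \delta_2$ implies
\begin{equation}
\int_0^\infty \frac{r^{d-1}}{1+r^{\delta_2}}\,\mathrm{d}r \;\le\; \int_0^\infty \frac{r^{d-1}}{1+r^{\delta_1}}\,\mathrm{d}r.
\end{equation}
The cleanest route is a pointwise comparison of the two integrands. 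I would split the domain at $r=1$: for $r \ge 1$ one has $r^{\delta_1} \le r^{\delta_2}$, so $1+r^{\delta_1} \le 1+r^{\delta_2}$ and hence $g_1(r) \ge g_2(r)$; for $0 \le r \le 1$ the inequality reverses, $g_1(r) \le g_2(r)$. Thus a naive pointwise bound does not hold on all of $(0,\infty)$, which is the main obstacle.

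To handle the sign change across $r=1$ I would evaluate the integrals in closed form rather than rely on pointwise domination. Each integral is a standard Beta-function integral: the substitution $t = r^{\delta_i}$ yields
\begin{equation}
\int_0^\infty \frac{r^{d-1}}{1+r^{\delta_i}}\,\mathrm{d}r = \frac{1}{\delta_i}\,B\!\left(\frac{d}{\delta_i},\,1-\frac{d}{\delta_i}\right) = \frac{\pi/\delta_i}{\sin(\pi d/\delta_i)},
\end{equation}
which converges precisely because the constraint $\delta_i > d$ guarantees $0 < d/\delta_i < 1$. The remaining task is purely analytic: show that the function $\phi(\delta) := (\pi/\delta)/\sin(\pi d/\delta)$ is non-increasing in $\delta$ for $\delta > d$. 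Setting $u = d/\delta \in (0,1)$, this is equivalent to showing $u/\sin(\pi u)$ is monotone in $u$, since $\phi = (\pi/d)\,u/\sin(\pi u)$ and $\delta \mapsto u$ is decreasing; as $u/\sin(\pi u)$ is increasing on $(0,1)$, the composition $\phi(\delta)$ is decreasing in $\delta$. Hence $\delta_1 \le \delta_2$ gives $\phi(\delta_2) \le \phi(\delta_1)$, i.e.\ $\mathbb{E}[I_2] \le \mathbb{E}[I_1]$, completing the proof. The delicate point to verify carefully is the monotonicity of $u/\sin(\pi u)$ on $(0,1)$, which follows from differentiating and checking that $\sin(\pi u) - \pi u \cos(\pi u) \ge 0$ there.
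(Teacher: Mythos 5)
Your proposal is correct and follows essentially the same route as the paper's Appendix~B: Campbell's theorem, reduction to a radial integral, closed-form evaluation, and monotonicity in $\delta$ --- indeed your Beta-integral value $\frac{1}{\delta}B\!\left(\frac{d}{\delta},1-\frac{d}{\delta}\right)=\frac{\pi/\delta}{\sin(\pi d/\delta)}$ is exactly the paper's $\Gamma\!\left(1-\frac{d}{\delta}\right)\Gamma\!\left(\frac{d}{\delta}\right)/\delta$ by Euler's reflection formula. The only substantive difference is that you actually prove the final monotonicity claim (via $u/\sin(\pi u)$ increasing on $(0,1)$, i.e.\ $\sin(\pi u)-\pi u\cos(\pi u)\geq 0$), a step the paper merely asserts.
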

\begin{proof}
The proof is given in Appendix B.
\end{proof}
Theorem \ref{mean_ordering_with_non_singular_path_loss_for_GPP} compares the mean power of interference for two different path-loss models and identical fading. It is clear from the proof of Theorem \ref{mean_ordering_with_non_singular_path_loss_for_GPP} that if the fading is different with $\mathbb{E}[h_{I_{1}}] \geq \mathbb{E}[h_{I_{2}}]$, the conclusion would still hold.

Under a stationary Poisson point process, we can establish the stronger stochastic ordering, LT ordering of interferences, because if $I_{1} \geq_{\mathrm{Lt}} I_{2}$, then $\mathbb{E}[I_{1}] \geq \mathbb{E}[I_{2}]$. In what follows, we find the conditions on the interference distributions with non-singular path-loss models with different path-loss exponents under which this stochastic ordering holds:
\begin{theorem}
\label{Lt_ordering_with_non_singular_path_loss_for_PPP}
Let $I_1$ and $I_2$ denote the interferences both with identical fading distribution $h_I$ in a stationary Poisson point process $\Phi_{\normalfont{\text{PPP}}}$. Also let $g_1(\Vert x \Vert)$ and $g_2(\Vert x \Vert)$ be the non-singular path-loss models with $a=1$ and $b=1$ in \eqref{eqn:path_loss_model} and with different path-loss exponents $\delta_1$ and $\delta_2$, respectively. If $\delta_1 \leq \delta_2$, then $I_2 \leq_{\mathrm{Lt}} I_1$.
\end{theorem}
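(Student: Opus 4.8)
The plan is to use the Laplace functional of the Poisson point process in \eqref{eqn:Def_Lf_of_PPP} to obtain an explicit expression for the Laplace transform of the interference, and then to show that this transform is monotonic in the path-loss exponent $\delta$ for each fixed $s > 0$. First I would write down $\mathcal{L}_{I}(s) = \mathbb{E}[\exp(-sI)]$ where $I = \sum_{x \in \Phi} h_I^{(x)} g(\Vert x \Vert)$. Conditioning on the point process and using the i.i.d. nature of the fading marks, the sum inside the exponential factors over points, and the Laplace functional \eqref{eqn:Def_Lf_of_PPP} applied with $u(x) = -\log \mathbb{E}_{h_I}[\exp(-s h_I g(\Vert x \Vert))]$ yields
\begin{equation}
\mathcal{L}_{I}(s) = \exp\left\{-\lambda \int_{\mathbb{R}^d} \left(1 - \mathbb{E}_{h_I}\left[e^{-s h_I g(\Vert x \Vert)}\right]\right) \mathrm{d}x\right\}.
\end{equation}
By the rotational symmetry of $g$ (it depends only on $\Vert x \Vert$), I would convert to a radial integral in $\mathbb{R}^d$, writing $\mathrm{d}x = c_d r^{d-1} \mathrm{d}r$ for the appropriate surface-area constant $c_d$, so that everything reduces to a one-dimensional integral over $r \in (0,\infty)$.

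Since $I_2 \leq_{\mathrm{Lt}} I_1$ is by definition \eqref{eqn:def_Lt_ordering} the statement $\mathcal{L}_{I_2}(s) \geq \mathcal{L}_{I_1}(s)$ for all $s > 0$, and since the exponential is monotone, the goal becomes showing that the exponent
\begin{equation}
\Psi(\delta) := \lambda c_d \int_0^\infty \left(1 - \mathbb{E}_{h_I}\left[e^{-s h_I (1 + r^{\delta})^{-1}}\right]\right) r^{d-1} \mathrm{d}r
\end{equation}
is non-increasing in $\delta$ on $[\,d,\infty)$, i.e. $\delta_1 \leq \delta_2 \Rightarrow \Psi(\delta_1) \geq \Psi(\delta_2)$, which is exactly $\mathcal{L}_{I_1}(s) \leq \mathcal{L}_{I_2}(s)$. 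The key structural observation is that $g_\delta(r) = (1 + r^\delta)^{-1}$ is, for each fixed $r$, monotone in $\delta$: for $r < 1$ the quantity $r^\delta$ decreases in $\delta$ so $g_\delta(r)$ increases, while for $r > 1$ it decreases, and the crossover point $r = 1$ gives $g_\delta = 1/2$ independent of $\delta$. Because the integrand $1 - \mathbb{E}[e^{-s h_I g}]$ is an increasing function of $g$ (larger path loss value means a larger contribution), a pointwise monotonicity argument on $g_\delta$ alone is not enough — the sign of the change in the integrand flips at $r=1$, so I cannot simply dominate under the integral.

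The hard part will therefore be handling this sign change and showing the positive contribution from $r > 1$ outweighs the negative contribution from $r < 1$. The plan is to differentiate $\Psi$ with respect to $\delta$ and show $\mathrm{d}\Psi/\mathrm{d}\delta \leq 0$, reducing to a weighted integral of $\partial_\delta g_\delta(r) = -r^\delta \log r \,(1+r^\delta)^{-2}$ against the positive factor $\partial_g \mathbb{E}[e^{-s h_I g}]$; the substitution $r \mapsto 1/r$ (which maps the interval $(1,\infty)$ onto $(0,1)$ and exploits $g_\delta(1/r) = 1 - g_\delta(r)$ together with $\log(1/r) = -\log r$) should let me pair each $r > 1$ contribution with its reciprocal and reduce the whole expression to a single integral over $(0,1)$ whose integrand is sign-definite. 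The convergence of these integrals, guaranteed by the standing assumption $\delta > d$ which controls the tail decay of $r^{d-1} g_\delta(r)$, must be checked so that the differentiation under the integral sign is justified. Once the integrand is shown to be of one sign, $\Psi$ is monotone in $\delta$ and the conclusion $I_2 \leq_{\mathrm{Lt}} I_1$ follows directly from the definition.
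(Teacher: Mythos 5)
Your proposal is correct, and it takes a genuinely different route from the paper's proof. The common starting point is the marked-PPP Laplace functional and the reduction to comparing the exponent $\Psi(\delta)$, together with the observation that $g_\delta$ crosses over at $r=1$ so no pointwise domination is possible. But the crux is resolved differently: the paper never differentiates in $\delta$. Instead, it substitutes $u=g_j(r)$ to put both exponents on the common interval $(0,1)$ with weights $w_j(u)$, applies the chord bound $1-e^{-s h_I u}\geq (1-e^{-s h_I})u$, and thereby reduces the claim to $\int_0^1 u\,(w_1(u)-w_2(u))\,\mathrm{d}u\geq 0$, which is exactly the mean-power comparison of Theorem \ref{mean_ordering_with_non_singular_path_loss_for_GPP} already evaluated in Appendix B in terms of Gamma functions; the randomness of $h_I$ is handled by conditioning via \cite[Theorem 5.A.7 (b)]{Shaked} rather than by keeping $\mathbb{E}_{h_I}$ inside the integral as you do. So the paper's proof buys brevity by reusing Theorem \ref{mean_ordering_with_non_singular_path_loss_for_GPP} and its special-function computation, while yours is self-contained, avoids Gamma functions entirely, and yields the slightly stronger statement that $\mathcal{L}_{I}(s)$ is monotone in $\delta$ as a continuous parameter; its only overhead is justifying differentiation under the integral sign, which you correctly flag (dominated convergence works for $\delta>d$, e.g.\ when $\mathbb{E}[h_I]<\infty$, whereas the paper's conditioning sidesteps integrability of the fading mean). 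Your one hedge (``should let me pair each $r>1$ contribution with its reciprocal'') does in fact close. Writing $F(g):=\mathbb{E}_{h_I}\bigl[h_I e^{-s h_I g}\bigr]$, substituting $t=1/r$ on $(1,\infty)$, and using $g_\delta(1/t)=1-g_\delta(t)$ together with the invariance of $t^{\delta}(1+t^{\delta})^{-2}$ under $t\mapsto 1/t$, one obtains
\begin{equation}
\frac{\mathrm{d}\Psi}{\mathrm{d}\delta}
= \lambda\, d\, c_d\, s \int_0^1 \frac{t^{\delta}\log t}{(1+t^{\delta})^{2}}
\Bigl[ F\bigl(1-g_{\delta}(t)\bigr)\,t^{-d-1} - F\bigl(g_{\delta}(t)\bigr)\,t^{d-1} \Bigr]\,\mathrm{d}t \;\leq\; 0 \text{,}
\end{equation}
because on $(0,1)$ we have $\log t<0$ while the bracket is nonnegative: $F$ is positive and decreasing, $1-g_\delta(t)\leq\tfrac{1}{2}\leq g_\delta(t)$ implies $F\bigl(1-g_\delta(t)\bigr)\geq F\bigl(g_\delta(t)\bigr)>0$, and $t^{-d-1}\geq t^{d-1}>0$. (Here $d\,c_d$ is the surface area of the unit sphere, the constant the paper writes as $c_d d$; your lone $c_d$ should be read this way.) A final point in favor of your version: your pairing makes the cancellation between the $r<1$ and $r>1$ regions explicit, whereas the paper's chord-bound step from \eqref{eqn:proof_suffi_condi_path_loss_PPP_3} to \eqref{eqn:proof_suffi_condi_path_loss_PPP_4} is applied under a weight $w_1-w_2$ that changes sign on $(0,1)$, so it implicitly relies on the same single-crossing structure that your substitution exhibits directly.
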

\begin{proof}
The proof is given in Appendix C.
\end{proof}

For the non-singular path-loss model $g(\Vert x \Vert)$ with $a=1$ and $b=1$ is assumed, however, the mean power of interference with $g_1(\Vert x \Vert)$ is greater than that with $g_2(\Vert x \Vert)$ from Theorem \ref{mean_ordering_with_non_singular_path_loss_for_GPP}. Thus, for more fair comparison of LT ordering of the interferences, we set the parameter $b$ in \eqref{eqn:path_loss_model} for $g_2(\Vert x \Vert)$ to have a equal mean power of interference as that of $g_1(\Vert x \Vert)$, using Campbell's Theorem as follows:
\begin{equation}
\label{eqn:parameter_b}
b=\left(\frac{\delta_2\Gamma\left(1-\frac{d}{\delta_1}\right)\Gamma\left(\frac{d}{\delta_1}\right)}{\delta_1\Gamma\left(1-\frac{d}{\delta_2}\right)\Gamma\left(\frac{d}{\delta_2}\right)}\right)^{-\frac{\delta_2}{d}} \text{,}
\end{equation}
where $b \leq 1$. It is noted that the interferences, $I_1$ and $I_2$, with path-loss models, $g_1(\Vert x \Vert)$ with $a=1$ and $b=1$ and $g_2(\Vert x \Vert)$ with $a=1$ and $b$ is set the value in \eqref{eqn:parameter_b} and with different path-loss exponents $\delta_1$ and $\delta_2$ in \eqref{eqn:path_loss_model} in a stationary point process $\Phi$ have same mean power of interference, even if $\delta_1 \leq \delta_2$. Then, with \eqref{eqn:parameter_b}, we can establish additional LT ordering of interferences as follows:
\addtocounter{corollary}{4}
\begin{corollary}
\label{Lt_ordering_for PPP_theorem}
Let $I_1$ and $I_2$ denote the interferences both with identical fading distribution $h_I$ in a stationary Poisson point process $\Phi_{\normalfont{\text{PPP}}}$. Also let $g_1(\Vert x \Vert)$ be the non-singular path-loss model in \eqref{eqn:path_loss_model} with $\delta_1$, $a=1$ and $b=1$ and $g_2(\Vert x \Vert)$ be the non-singular path-loss models with $\delta_2$, $a=1$ and $b$ is set the value in \eqref{eqn:parameter_b}. If $\delta_1 \leq \delta_2$, then $I_2 \leq_{\mathrm{Lt}} I_1$.
\end{corollary}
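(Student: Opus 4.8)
The plan is to exploit the Poisson structure exactly as in the proof of Theorem \ref{Lt_ordering_with_non_singular_path_loss_for_PPP}, the only genuinely new ingredient being the rescaling induced by the parameter $b$. Using the Laplace functional \eqref{eqn:Def_Lf_of_PPP} together with the independence of the marks $h_I^{(x)}$, the Laplace transform of the interference is $\mathcal{L}_{I_i}(s)=\exp\{-\lambda\int_{\mathbb{R}^d}[1-\mathcal{L}_{h_I}(s\,g_i(\Vert x\Vert))]\mathrm{d}x\}$, where $\mathcal{L}_{h_I}$ is the Laplace transform of the common fading. Hence $I_2\leq_{\mathrm{Lt}}I_1$ is equivalent to $\int_{\mathbb{R}^d}[1-\mathcal{L}_{h_I}(s\,g_2)]\mathrm{d}x\leq\int_{\mathbb{R}^d}[1-\mathcal{L}_{h_I}(s\,g_1)]\mathrm{d}x$ for every $s>0$. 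Writing $1-\mathcal{L}_{h_I}(t)=\int_0^\infty(1-e^{-t\omega})\,\mu(\mathrm{d}\omega)$ with $\mu$ the law of $h_I$ and interchanging integrals, it suffices to treat the deterministic kernel $\phi(v)=1-e^{-v}$, since $\omega$ merely rescales $s$. Thus the target reduces to showing $F_2(s)\leq F_1(s)$ for all $s>0$, where $F_i(s)=\int_0^\infty\phi(s\,g_i(r))\,r^{d-1}\mathrm{d}r$ in radial coordinates.

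Next I would absorb $b$ through the substitution $u=b^{1/\delta_2}r$, which turns $g_2(r)=(1+br^{\delta_2})^{-1}$ into $(1+u^{\delta_2})^{-1}$ at the cost of a prefactor $b^{-d/\delta_2}$, so $F_2(s)=b^{-d/\delta_2}\int_0^\infty\phi(s(1+u^{\delta_2})^{-1})u^{d-1}\mathrm{d}u$. Then, for each exponent, I substitute $w=(1+\cdot^{\delta_i})^{-1}\in(0,1)$, recasting both integrals over the common domain $(0,1)$ as $F_i(s)=\int_0^1\phi(sw)\,\rho_i(w)\,\mathrm{d}w$ with weights $\rho_1(w)=\frac{1}{\delta_1}(1-w)^{\alpha_1-1}w^{-\alpha_1-1}$ and $\rho_2(w)=b^{-d/\delta_2}\frac{1}{\delta_2}(1-w)^{\alpha_2-1}w^{-\alpha_2-1}$, where $\alpha_i=d/\delta_i\in(0,1)$. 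The crucial bookkeeping is that the choice of $b$ in \eqref{eqn:parameter_b} is exactly the equal-mean-power condition $\int_0^\infty g_1 r^{d-1}\mathrm{d}r=\int_0^\infty g_2 r^{d-1}\mathrm{d}r$, which in these variables reads $\int_0^1 w\,m(w)\,\mathrm{d}w=0$ for $m:=\rho_1-\rho_2$; indeed $\int_0^1 w\,\rho_i(w)\,\mathrm{d}w=\frac{1}{\delta_i}\Gamma(1-\alpha_i)\Gamma(\alpha_i)$, a Beta-function identity that reproduces \eqref{eqn:parameter_b}.

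The heart of the argument is a single-crossing comparison. Because $\delta_1\leq\delta_2$ gives $\alpha_1\geq\alpha_2$, the ratio $\rho_1(w)/\rho_2(w)$ is a positive constant times $((1-w)/w)^{\alpha_1-\alpha_2}$, which decreases monotonically from $+\infty$ to $0$ on $(0,1)$; hence $m(w)$ is positive for small $w$ and negative for large $w$, changing sign exactly once, at some $w_0$. I would then factor $\phi(sw)=w\cdot[\phi(sw)/w]$ and note that $\phi(sw)/w$ is decreasing in $w$, since $\phi$ is concave with $\phi(0)=0$. Subtracting $\phi(sw_0)/w_0$ times the zero-first-moment identity gives $F_1(s)-F_2(s)=\int_0^1[\phi(sw)/w-\phi(sw_0)/w_0]\,w\,m(w)\,\mathrm{d}w$, and on each side of $w_0$ the bracket and $w\,m(w)$ carry the same sign, so the integrand is nonnegative throughout and $F_2(s)\leq F_1(s)$. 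Integrating back against $\mu$ then yields $\mathcal{L}_{I_2}(s)\geq\mathcal{L}_{I_1}(s)$, i.e. $I_2\leq_{\mathrm{Lt}}I_1$.

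I expect the main obstacle to be establishing the single sign change of $m$ cleanly and pairing it correctly with the concavity of $\phi$: the monotone-ratio observation must be combined with the equal-mean normalization so that the residual integral is manifestly nonnegative on both sides of $w_0$. A secondary technical point is justifying the interchange of integration in the mixture reduction and the integrability of $\rho_i$ near $w=0$, where $\phi(sw)\sim sw$ tames the $w^{-\alpha_i-1}$ singularity; both are routine given $\alpha_i\in(0,1)$.
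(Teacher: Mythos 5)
Your proof is correct, and at the decisive step it takes a genuinely different route from the paper --- one that is in fact more rigorous. The paper proves this corollary by pointing back to the proof of Theorem \ref{Lt_ordering_with_non_singular_path_loss_for_PPP} (Appendix C) with $b$ reset to the value in \eqref{eqn:parameter_b}; that proof shares your skeleton exactly (reduce to deterministic fading by conditioning, write the PPP Laplace exponent radially, change variables to $(0,1)$, so that everything hinges on $\int_0^1 \phi(sw)\,m(w)\,\mathrm{d}w \ge 0$ with $\phi(v)=1-e^{-v}$ and $m=\rho_1-\rho_2$), but it finishes by applying the pointwise bound $(1-e^{-c})u \le 1-e^{-cu}$ \emph{under the integral sign} in \eqref{eqn:proof_suffi_condi_path_loss_PPP_4}. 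Since $m$ changes sign on $(0,1)$ (it is negative near $w=1$, where $\rho_2$ has the stronger singularity), that substitution is valid only where $m\ge 0$ and reverses direction where $m<0$; as written it is a gap, and it is most acute precisely in the corollary's setting, where the mean term it reduces to equals zero by the choice \eqref{eqn:parameter_b}, leaving no slack at all. Your single-crossing argument supplies exactly the missing justification: monotonicity of the ratio $\rho_1/\rho_2$ forces a single sign change at some $w_0$, concavity of $\phi$ with $\phi(0)=0$ makes $\phi(sw)/w$ decreasing, and subtracting $(\phi(sw_0)/w_0)\int_0^1 w\,m(w)\,\mathrm{d}w = 0$ yields the integrand $[\phi(sw)/w - \phi(sw_0)/w_0]\,w\,m(w)$, which is nonnegative on both sides of $w_0$. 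This Chebyshev-type pairing is what the equal-mean normalization is \emph{for}, and with one extra line it also repairs Theorem \ref{Lt_ordering_with_non_singular_path_loss_for_PPP} itself: when $\int_0^1 w\,m(w)\,\mathrm{d}w = A \ge 0$ (the $b=1$ case), the same pairing gives $\int_0^1 \phi(sw)\,m(w)\,\mathrm{d}w \ge (\phi(sw_0)/w_0)\,A \ge 0$. Two minor points to tidy in your write-up: the first-moment identity for $\rho_2$ must carry the prefactor, i.e. $\int_0^1 w\,\rho_2(w)\,\mathrm{d}w = b^{-d/\delta_2}\,\Gamma(1-\alpha_2)\Gamma(\alpha_2)/\delta_2$ (it is this quantity, set equal to $\Gamma(1-\alpha_1)\Gamma(\alpha_1)/\delta_1$, that reproduces \eqref{eqn:parameter_b}); and the degenerate case $\delta_1=\delta_2$ should be dispatched separately (then $b=1$ and $m\equiv 0$), since your monotone-ratio argument presumes $\alpha_1>\alpha_2$.
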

\begin{proof}
The proof is the same as the proof for Corollary \ref{Lt_ordering_with_non_singular_path_loss_for_PPP} with setting the value of $b$ for $g_2(\Vert x \Vert)$ is \eqref{eqn:parameter_b} instead of $b=1$.
\end{proof}
From Corollary \ref{Lt_ordering_for PPP_theorem}, it is seen that the interference distributions with non-singular path-loss models in stationary Poisson point process are monotonic in the path-loss exponent with respect to the LT order. Indeed, the parameter $b$ in \eqref{eqn:parameter_b} was set to ensure the mean interference powers are equal. Hence, the better performance in SIR-based outage probability or ergodic capacity with increased path-loss exponent $\delta$ is not due to an improvement of average interference power.

Theorem \ref{mean_ordering_with_non_singular_path_loss_for_GPP}, \ref{Lt_ordering_with_non_singular_path_loss_for_PPP} and Corollary \ref{Lt_ordering_for PPP_theorem} show that one path-loss model need not dominate the other pointwise to get an ordering in interference in the case of the non-singular path-loss model. Indeed, if $g_1(r) \geq g_2(r), \forall r\in\mathbb{R}^+$ (for example, a comparison of singular and non-singular path-loss model), it is obvious that $\mathbb{E}[I_{1}] \geq \mathbb{E}[I_{2}]$ since the aggregated interference power with $g_1(r)$ is always greater than that with $g_2(r)$. For a stationary Poisson point process, we can also easily observe when $g_1(r) \geq g_2(r)$, $I_1 \geq_{\mathrm{Lt}} I_2$ from \eqref{eqn:proof_suffi_condi_path_loss_PPP_1}. However, from Theorem \ref{mean_ordering_with_non_singular_path_loss_for_GPP}, \ref{Lt_ordering_with_non_singular_path_loss_for_PPP} and Corollary \ref{Lt_ordering_for PPP_theorem}, we can establish the stochastic orderings of interferences even in the case that $g_1(r) < g_2(r)$ in some range of $r$ and $g_1(r) > g_2(r)$ in another range of $r$ as shown in Fig. \ref{fig:non_sin_path_loss_models}. The simulation result to verify these theorems will be shown in Section \ref{sec:Numerical_Results}.

For the singular path-loss model ($a=0$ and $b>0$) in \eqref{eqn:path_loss_model}, a finite mean of interference power does not exist in a stationary point process since \eqref{eqn:Proof_for_non_sin_path_loss_GPP_1} does not converge. Thus the means of interferences cannot be compared. In case of a stationary Poisson point process, we have a closed-form expression for Laplace transform of interference with the singular path-loss model in a stationary Poisson point process as follows \cite{Haenggi2009a}:
\begin{equation}
\label{eqn:Lt_PPP}
\mathcal{L}_{I}(s) = \exp(-\lambda c_d \mathbb{E}[h_{I}^{\alpha}]\Gamma(1-\alpha)s^{\alpha}),
\end{equation}
where $\alpha=d/\delta$ and $c_d$ is the volume of the $d$-dimensional unit ball. $\mathbb{E}[h_{I}^{\alpha}]$ is a fractional moment of $h_{I}$ with $0<\alpha<1$. Unlike the non-singular path-loss model, however, the LT ordering does not hold between two different interferences corresponding to path-loss exponents $\delta_1 \leq \delta_2$ in case of the singular path-loss model since \eqref{eqn:Lt_PPP} is not a decreasing function of $\delta$.

\section{Comparison of Different Point Processes}
In \cite{Blaszczyszyn2009} directionally convex ordering (dcx) is used to order point processes. In this section, we define a new stochastic ordering between point processes and state some results involving the ordering of point processes. We first define a new stochastic ordering of point processes based on the well-known Laplace functional:
\begin{definition}
\label{Def_Lf_ordering}
Let $\Phi_1$ and $\Phi_2$ be two stationary point processes such that
\begin{equation}
\label{eqn:Def_Lf_ordering}
L_{\Phi_1}(u)=\mathbb{E}\left[e^{-\sum_{x\in\Phi_1}u(x)}\right] \geq \mathbb{E}\left[e^{-\sum_{x\in\Phi_2}u(x)}\right]=L_{\Phi_2}(u)
\end{equation}
where $u(\cdot)$ runs over the set $\mathscr{U}$ of all non-negative functions on $\mathbb{R}^d$. Then $\Phi_1$ is said to be smaller than $\Phi_2$ in the Laplace functional (LF) order (denoted by $\Phi_1 \leq_{\mathrm{Lf}} \Phi_2$).
\end{definition}
It can be shown that LF ordering follows from dcx ordering, which makes LF ordering easier to verify and easier to relate to interference metrics. Note that the LT ordering in \eqref{eqn:def_Lt_ordering} is for RVs, whereas the LF ordering in \eqref{eqn:Def_Lf_ordering} is for point processes. They can be connected in the following way:
\begin{equation}
\label{eqn:Lf_to Lt_ordering}
\Phi_1 \leq_{\mathrm{Lf}} \Phi_2 \Longleftrightarrow \sum_{x\in\Phi_1}u(x) \leq_{\mathrm{Lt}} \sum_{x\in\Phi_2}u(x) \text{,  } \forall u \in \mathscr{U} \end{equation}
Hence, it is possible to think of LF ordering of point processes as the LT ordering of their interferences in the absence of fading, for all non-negative path-loss functions. But as we will see, LT ordering of their interferences in the presence of fading can be proved when two point processes are LF ordered. We next prove a generalization of Theorem \ref{Lt_ordering_w_fading_for_GPP} where the two point processes are different and LF ordered.
\begin{theorem}
\label{Lt_ordering_interference_from_PP}
Let $\Phi_1$ and $\Phi_2$ be two stationary point processes and $h_{I_1}$ and $h_{I_2}$ be RVs whose distributions capture the fading channels between the receiver and interferers under the two scenarios that are compared. Also let $I_1$ and $I_2$ denote the interferences with same path-loss model $g(\Vert x \Vert)$ in $\Phi_1$ and $\Phi_2$ respectively. If $\Phi_1 \leq_{\mathrm{Lf}} \Phi_2$ and $h_{I_1} \leq_{\mathrm{Lt}} h_{I_2}$, then $I_1 \leq_{\mathrm{Lt}} I_2$.
\end{theorem}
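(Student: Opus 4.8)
The plan is to reduce the Laplace transform of each interference to a Laplace functional of the corresponding point process, and then to exploit the two hypotheses separately through a short chain of inequalities. First I would condition on the point process. Fixing $\Phi_j$ and using that the fading coefficients $\{h_{I_j}^{(x)}\}_x$ are i.i.d. and independent of $\Phi_j$, the conditional Laplace transform factorizes over the points:
\begin{equation}
\mathbb{E}\left[e^{-sI_j}\mid\Phi_j\right] = \prod_{x\in\Phi_j}\mathbb{E}\left[e^{-sh_{I_j}g(\Vert x\Vert)}\right] = \prod_{x\in\Phi_j}\mathcal{L}_{h_{I_j}}\!\left(sg(\Vert x\Vert)\right).
\end{equation}
Writing the right side as $\exp\{-\sum_{x\in\Phi_j}u_j(x)\}$ with $u_j(x):=-\log\mathcal{L}_{h_{I_j}}(sg(\Vert x\Vert))$ and taking expectation over $\Phi_j$ gives the key identity $\mathcal{L}_{I_j}(s)=L_{\Phi_j}(u_j)$. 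Since $\mathcal{L}_{h_{I_j}}$ maps $[0,\infty)$ into $(0,1]$, each $u_j$ is a non-negative function on $\mathbb{R}^d$, hence a legitimate test function for the Laplace functional in Definition \ref{Def_Lf_ordering}.

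Next I would bring in the two orderings. From $h_{I_1}\leq_{\mathrm{Lt}}h_{I_2}$ we have $\mathcal{L}_{h_{I_1}}(t)\geq\mathcal{L}_{h_{I_2}}(t)$ for all $t>0$; applying the increasing map $-\log(\cdot)$ and setting $t=sg(\Vert x\Vert)\geq 0$ yields the pointwise inequality $u_1(x)\leq u_2(x)$ for every $x$. Because $L_{\Phi_1}(u)=\mathbb{E}[e^{-\sum_{x\in\Phi_1}u(x)}]$ is monotonically decreasing in its argument pointwise, $u_1\leq u_2$ gives $L_{\Phi_1}(u_1)\geq L_{\Phi_1}(u_2)$. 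Separately, the hypothesis $\Phi_1\leq_{\mathrm{Lf}}\Phi_2$ applied to the admissible test function $u_2$ gives $L_{\Phi_1}(u_2)\geq L_{\Phi_2}(u_2)$. Chaining these through the intermediate quantity $L_{\Phi_1}(u_2)$,
\begin{equation}
\mathcal{L}_{I_1}(s)=L_{\Phi_1}(u_1)\geq L_{\Phi_1}(u_2)\geq L_{\Phi_2}(u_2)=\mathcal{L}_{I_2}(s),
\end{equation}
which holds for every $s>0$, i.e. $I_1\leq_{\mathrm{Lt}}I_2$.

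The one subtlety, and the step I expect to require the most care, is that the two hypotheses do not act on the same object: the fading ordering produces a pointwise comparison of the two \emph{test functions} $u_1,u_2$, whereas the point-process ordering compares the two \emph{functionals} $L_{\Phi_1},L_{\Phi_2}$ evaluated at a common argument. The argument closes only because it is routed through $L_{\Phi_1}(u_2)$, invoking monotonicity of the Laplace functional in its argument for the first inequality and the definition of $\leq_{\mathrm{Lf}}$ for the second. Beyond this, one must verify the non-negativity of $u_j$ so that $u_2$ is admissible, and justify the interchange of product and expectation over $\Phi_j$, both of which are routine. This reasoning specializes to Theorem \ref{Lt_ordering_w_fading_for_GPP} when $\Phi_1=\Phi_2$, since then the $\leq_{\mathrm{Lf}}$ step degenerates to an equality and only the fading comparison remains.
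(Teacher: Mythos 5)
Your proof is correct, and at the top level it follows the same strategy as the paper: chain the two hypotheses through the intermediate scenario in which the point process is $\Phi_1$ but the fading is $h_{I_2}$ (your $L_{\Phi_1}(u_2)$), using the fading order for one inequality and the LF order for the other. The difference is in execution. The paper disposes of the theorem in one line, citing Theorem \ref{Lt_ordering_w_fading_for_GPP} for the fading leg and the equivalence \eqref{eqn:Lf_to Lt_ordering} for the point-process leg, with the test function $u(x)=h_I g(\Vert x\Vert)$. That citation is informal on exactly the point your proof repairs: the set $\mathscr{U}$ in Definition \ref{Def_Lf_ordering} consists of deterministic non-negative functions, so a random $u(x)=h_Ig(\Vert x\Vert)$ is not strictly an admissible argument of a Laplace functional, and some step is needed to integrate out the i.i.d.\ marks before \eqref{eqn:Lf_to Lt_ordering} can be invoked. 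Your identity $\mathcal{L}_{I_j}(s)=L_{\Phi_j}(u_j)$ with $u_j(x)=-\log\mathcal{L}_{h_{I_j}}(sg(\Vert x\Vert))$ supplies precisely that missing reduction (a marking-theorem-style argument), after which the fading hypothesis enters as a pointwise comparison of test functions together with monotonicity of $L_{\Phi_1}$, and the point-process hypothesis enters by definition of $\leq_{\mathrm{Lf}}$ applied to the deterministic function $u_2$. What the paper's route buys is modularity, since it reuses Theorem \ref{Lt_ordering_w_fading_for_GPP} as a black box; what yours buys is rigor and economy --- as you note, your argument even re-proves Theorem \ref{Lt_ordering_w_fading_for_GPP} as the special case $\Phi_1=\Phi_2$ rather than relying on it, and it does so without the paper's Appendix~A device of conditioning on $Z=\int_{\mathbb{R}^d}g(\Vert x\Vert)\Phi(\mathrm{d}x)$, which implicitly treats the fading coefficient as common to all interferers.
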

\begin{proof}
Theorem \ref{Lt_ordering_interference_from_PP} follows from Theorem \ref{Lt_ordering_w_fading_for_GPP} and equation \eqref{eqn:Lf_to Lt_ordering} with $u(x)=h_Ig(\Vert x \Vert)$.
\end{proof}
Note that in Theorem \ref{Lt_ordering_interference_from_PP}, $h_I$ does not only capture a fading distribution, but also can capture a random thinning property of point process.

\subsection{Laplace Functional Ordering between Specific Point Processes}
Here, we compare the LF order between different point processes which are commonly used. These are Poisson point process and Neyman-Scott process with Poisson distributed number of points in each cluster which is one of example of Poisson cluster process. The Neyman-Scott process results from homogeneous independent clustering applied to a stationary Poisson point process. The details for these point processes can be found in \cite{Stoyan1995, Haenggi2009b}.
\begin{theorem}
\label{Lf_PPP_vs_PCP}
If $\Phi_{\normalfont{\text{PPP}}}$ and $\Phi_{\normalfont{\text{PCP}}}$ denote Poisson point process and Neyman-Scott process with Poisson distributed number of daughter points respectively and both point processes have same intensity $\lambda$, then $\Phi_{\normalfont{\text{PCP}}} \leq_{\mathrm{Lf}} \Phi_{\normalfont{\text{PPP}}}$.
\end{theorem}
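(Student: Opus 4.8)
The plan is to compare the two Laplace functionals head-on: since $\Phi_{\text{PCP}} \leq_{\mathrm{Lf}} \Phi_{\text{PPP}}$ means precisely $L_{\Phi_{\text{PCP}}}(u) \geq L_{\Phi_{\text{PPP}}}(u)$ for every $u \in \mathscr{U}$, and the PPP functional is already given by \eqref{eqn:Def_Lf_of_PPP}, the first task is to obtain the Laplace functional of the Neyman-Scott process in closed form. I would recall that such a process is a Poisson cluster process: the parent points form a stationary PPP of some intensity $\lambda_p$, and each parent at $y$ independently spawns a cluster of $\mathrm{Poisson}(\bar c)$ daughter points, each placed at $y+z$ with $z$ drawn from a dispersion density $f$. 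Conditioning on the parents and using independence of clusters, the functional factorizes over parents; summing the Poisson law of the number of daughters in a single cluster gives
\begin{equation}
\mathbb{E}\!\left[e^{-\sum_{i} u(y+z_i)}\right] = \exp\!\left\{-\bar c \int_{\mathbb{R}^d}\bigl[1 - e^{-u(y+z)}\bigr] f(z)\,\mathrm{d}z\right\}.
\end{equation}
Substituting this cluster factor into the PPP functional of the parents then yields
\begin{equation}
L_{\Phi_{\text{PCP}}}(u) = \exp\!\left\{-\lambda_p \int_{\mathbb{R}^d}\Bigl[1 - e^{-v(y)}\Bigr]\mathrm{d}y\right\}, \qquad v(y) := \bar c \int_{\mathbb{R}^d}\bigl[1 - e^{-u(y+z)}\bigr]f(z)\,\mathrm{d}z .
\end{equation}

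With both functionals in hand, the desired inequality is equivalent, after taking logarithms and negating, to showing that the PCP exponent does not exceed the PPP exponent, i.e.\ $\lambda_p \int_{\mathbb{R}^d}[1 - e^{-v(y)}]\,\mathrm{d}y \leq \lambda \int_{\mathbb{R}^d}[1 - e^{-u(x)}]\,\mathrm{d}x$. The crux is the elementary bound $1 - e^{-v} \leq v$, valid for all $v \geq 0$, which majorizes the left side by $\lambda_p \int_{\mathbb{R}^d} v(y)\,\mathrm{d}y$. Since every integrand is non-negative, Tonelli's theorem lets me swap the order of integration in the double integral defining $\int v$, and the translation invariance of Lebesgue measure gives $\int_{\mathbb{R}^d}[1 - e^{-u(y+z)}]\,\mathrm{d}y = \int_{\mathbb{R}^d}[1 - e^{-u(x)}]\,\mathrm{d}x$ for each fixed $z$. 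Using $\int_{\mathbb{R}^d} f(z)\,\mathrm{d}z = 1$ together with the intensity relation $\lambda = \lambda_p \bar c$ (both processes share the intensity $\lambda$ by hypothesis) collapses the bound exactly onto $\lambda \int_{\mathbb{R}^d}[1 - e^{-u(x)}]\,\mathrm{d}x$, which is the PPP exponent, completing the argument.

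I expect the only genuine obstacle to be the bookkeeping in deriving the Neyman-Scott Laplace functional — in particular getting the nested cluster factor right and cleanly justifying the factorization over parents. Once that closed form is established, the comparison is driven entirely by the inequality $1 - e^{-v} \le v$, which is where the intuition lives: clustering concentrates points, and the concavity of $1 - e^{-v}$ forces a clustered process of equal intensity to have a larger Laplace functional — hence to be \emph{smaller} in the LF order — than the corresponding Poisson process. No convergence subtleties arise, because $u \geq 0$ renders all integrands non-negative and makes every interchange of integration and summation legitimate by monotone convergence and Tonelli.
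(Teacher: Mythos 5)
Your proposal is correct and follows essentially the same route as the paper's own proof in Appendix D: you derive the same closed-form Neyman--Scott Laplace functional (which the paper simply cites), apply the same key bound $1-e^{-v}\leq v$ to the cluster exponent, and close the argument via Tonelli, translation invariance, $\int f = 1$, and the intensity relation $\lambda=\lambda_p\bar{c}$, exactly as in \eqref{eqn:PPP_vs_PCP_1}--\eqref{eqn:PPP_vs_PCP_2}. The only cosmetic difference is that you compare exponents after taking logarithms while the paper compares the exponentials directly.
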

\begin{proof}
The proof is given in Appendix D.
\end{proof}
We consider another point process, the mixed Poisson process which is a simple instance of a Cox process. It can be thought of as a stationary Poisson point process with randomized intensity parameter $X$ which has the averaged intensity measure $\mathbb{E}_{X}[X]=\lambda$. Every sample of such a process looks like a sample of some stationary Poisson point process. We compare the LF ordering of this point process with that of stationary Poisson point process as follows:
\begin{theorem}
\label{Lf_PPP_vs_MPP}
If $\Phi_{\normalfont{\text{PPP}}}$ and $\Phi_{\normalfont{\text{MPP}}}$ denote Poisson point process and mixed Poisson process respectively and both point processes have same average intensity $\lambda$, then $\Phi_{\normalfont{\text{MPP}}} \leq_{\mathrm{Lf}} \Phi_{\normalfont{\text{PPP}}}$.
\end{theorem}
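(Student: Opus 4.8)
The plan is to reduce the Laplace-functional inequality to a single application of Jensen's inequality, after expressing the Laplace functional of the mixed Poisson process by conditioning on its random intensity. First I would use the fact that, conditioned on the intensity value $X$, the process $\Phi_{\text{MPP}}$ is a stationary Poisson point process with intensity $X$, so \eqref{eqn:Def_Lf_of_PPP} applied conditionally and then averaged over $X$ gives
\[
L_{\Phi_{\text{MPP}}}(u) = \mathbb{E}_X\left[\exp\left\{-X\int_{\mathbb{R}^d}\left[1-\exp(-u(x))\right]\mathrm{d}x\right\}\right].
\]
Abbreviating $t := \int_{\mathbb{R}^d}[1-\exp(-u(x))]\,\mathrm{d}x$, which is non-negative because $u\in\mathscr{U}$, this reads $L_{\Phi_{\text{MPP}}}(u) = \mathbb{E}_X[\exp(-Xt)]$. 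By contrast, the Poisson point process of intensity $\lambda=\mathbb{E}_X[X]$ has, directly from \eqref{eqn:Def_Lf_of_PPP}, the value $L_{\Phi_{\text{PPP}}}(u) = \exp(-\lambda t) = \exp(-\mathbb{E}_X[X]\,t)$.

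Next I would compare the two expressions pointwise in $u$. For each fixed $t\geq 0$ the map $\xi\mapsto\exp(-\xi t)$ is convex on $[0,\infty)$, so Jensen's inequality yields $\mathbb{E}_X[\exp(-Xt)]\geq\exp(-\mathbb{E}_X[X]\,t)$. Hence $L_{\Phi_{\text{MPP}}}(u)\geq L_{\Phi_{\text{PPP}}}(u)$ for every $u\in\mathscr{U}$, which is precisely the defining inequality \eqref{eqn:Def_Lf_ordering} for $\Phi_{\text{MPP}}\leq_{\mathrm{Lf}}\Phi_{\text{PPP}}$. This is the entire content of the theorem: the key move is recognizing that equalizing the \emph{average} intensity of the randomized process with the \emph{fixed} intensity of the benchmark Poisson process turns the ordering into the statement that averaging a convex function increases it.

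I do not expect a serious obstacle, since the argument is routine once the conditioning representation is written down; the only point demanding a little care is the degenerate case in which the integral $t$ diverges. If $t=\infty$, then $L_{\Phi_{\text{PPP}}}(u)=0$, and because $X>0$ almost surely (it is an intensity) the integrand $\exp(-Xt)$ vanishes almost surely as well, so both functionals are zero and the inequality holds trivially. With that edge case absorbed, the convexity argument settles the remaining cases in one line, and no explicit knowledge of the distribution of $X$ beyond $\mathbb{E}_X[X]=\lambda$ is required.
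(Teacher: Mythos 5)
Your proposal is correct and follows essentially the same route as the paper's own proof in Appendix E: representing $L_{\Phi_{\text{MPP}}}(u)$ by conditioning on the random intensity $X$ and then applying Jensen's inequality to the convex map $\xi \mapsto \exp(-\xi t)$. Your brief treatment of the degenerate case $t=\infty$ is a minor addition the paper omits, but it does not change the argument.
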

\begin{proof}
The proof is given in Appendix E.
\end{proof}
The aggregated interference from Poisson cluster process or mixed Poisson process is always less than that from Poisson point process in LT order by Theorem \ref{Lt_ordering_interference_from_PP}. This means that the orderings of SIR-based outage probabilities or ergodic capacities are established depending on a point process by the performance metrics in Section \ref{sec:Ordering_of_Perform_Metrics}. It is noted that from Campbell's theorem in Section \ref{sec:Campbell_Theorem}, any two stationary point processes with same intensity have equal mean power of interferences (when the expectation in \eqref{eqn:Interference_model} exists). Hence, the better performance in SIR-based outage probability or ergodic capacity in specific point process is not due to an improvement of average interference power.

The above point processes are less than Poisson point process in LF order. In what follows, we show the point process which has larger LF ordering than Poisson point process even though it is non-stationary and non-isotropic. In binomial point processes, there are a total of $N$ transmitting nodes uniformly distributed in a $d$-dimensional ball of radius $r$ centered at the origin, denoted as $B_0(r)$. The density of the process is given by $\lambda=N/(c_d r^d)$ where $c_d$ is the volume of the $d$-dimensional unit ball \cite{Srinivasa2007}. In this case, we can compare the LF ordering of point processes in bounded area as follows:
\begin{theorem}
\label{Lf_PPP_vs_BPP}
Let $\Phi_{\normalfont{\text{PPP}}}(r)$ be a Poisson point process over $B_0(r)$. If $\Phi_{\normalfont{\text{PPP}}}(r)$ and $\Phi_{\normalfont{\text{BPP}}}$ denote Poisson point process and binomial point process respectively in finite area and both point processes have same intensity $\lambda$, then $\Phi_{\normalfont{\text{PPP}}}(r) \leq_{\mathrm{Lf}} \Phi_{\normalfont{\text{BPP}}}$ whenever $0 \leq \lambda\int_{B_0(r)}\left[1-\exp(-u(x))\right]\mathrm{d} x \leq N$ holds.
\end{theorem}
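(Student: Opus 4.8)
The plan is to compute both Laplace functionals in closed form and reduce the ordering claim to an elementary scalar inequality. Abbreviate $J(u) := \int_{B_0(r)}\left[1-\exp(-u(x))\right]\mathrm{d}x$. For the Poisson point process restricted to $B_0(r)$, formula \eqref{eqn:Def_Lf_of_PPP} gives directly $L_{\Phi_{\text{PPP}}(r)}(u) = \exp\left(-\lambda J(u)\right)$. For the binomial point process, the $N$ nodes are i.i.d.\ uniform on $B_0(r)$, so by the independence and common law of the points the Laplace functional factorizes: $L_{\Phi_{\text{BPP}}}(u) = \mathbb{E}\!\left[\prod_{i=1}^{N} e^{-u(X_i)}\right] = \left(\frac{1}{c_d r^d}\int_{B_0(r)} e^{-u(x)}\,\mathrm{d}x\right)^{\!N}$. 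Using $\int_{B_0(r)} e^{-u(x)}\,\mathrm{d}x = c_d r^d - J(u)$ together with the intensity relation $\lambda = N/(c_d r^d)$, this rewrites as $L_{\Phi_{\text{BPP}}}(u) = \left(1 - \lambda J(u)/N\right)^{N}$.

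Recalling Definition \ref{Def_Lf_ordering}, establishing $\Phi_{\text{PPP}}(r) \leq_{\mathrm{Lf}} \Phi_{\text{BPP}}$ is equivalent to showing $L_{\Phi_{\text{PPP}}(r)}(u) \geq L_{\Phi_{\text{BPP}}}(u)$ for all admissible $u$. Setting $t := \lambda J(u)$, the hypothesis of the theorem reads $0 \leq t \leq N$, and the inequality to prove becomes
\begin{equation}
e^{-t} \geq \left(1 - \frac{t}{N}\right)^{\!N}, \qquad 0 \leq t \leq N.
\end{equation}
This follows from the elementary bound $1 + y \leq e^{y}$: taking $y = -t/N$ yields $1 - t/N \leq e^{-t/N}$, and since $0 \leq t \leq N$ forces the base $1 - t/N$ to lie in $[0,1]$, raising both sides to the nonnegative integer power $N$ preserves the inequality, giving $\left(1 - t/N\right)^{N} \leq e^{-t}$. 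This is exactly the required comparison, completing the argument.

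There is no deep obstacle here; the proof is essentially a structural computation followed by a one-line convexity estimate. The two points that require care are the sign/direction bookkeeping in Definition \ref{Def_Lf_ordering} (the \emph{larger} Laplace functional corresponds to the \emph{smaller} process), and the exact role of the stated hypothesis. Since $u \geq 0$ gives $0 \leq 1 - e^{-u(x)} \leq 1$, one in fact has $J(u) \leq |B_0(r)| = c_d r^d = N/\lambda$, so $t \leq N$ holds automatically for every $u \in \mathscr{U}$; the condition $0 \leq \lambda J(u) \leq N$ merely records the range on which the base $1 - t/N$ stays nonnegative, which is precisely what legitimizes the power-preserving step. The only genuinely substantive ingredient is the factorized form of the binomial Laplace functional, which rests on the independence and identical uniform distribution of the $N$ points over $B_0(r)$.
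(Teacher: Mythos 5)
Your proof is correct and follows essentially the same route as the paper's Appendix F: both reduce the claim to the closed-form binomial Laplace functional $\left(1-\lambda J(u)/N\right)^{N}$ (which the paper cites from the literature and you derive directly from the i.i.d.\ uniform structure) and then invoke the elementary bound $(1-t/N)^{N}\leq e^{-t}$ for $0\leq t\leq N$. Your closing remark that $u\geq 0$ forces $\lambda J(u)\leq \lambda c_d r^d = N$, so the theorem's stated hypothesis is automatically satisfied for every $u\in\mathscr{U}$, is a nice sharpening that the paper does not make explicit.
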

\begin{proof}
The proof is given in Appendix F.
\end{proof}
In a finite area $B_0(r)$, the aggregated interference at the origin from a binomial point process is always larger than that from Poisson point process in LT order from Theorem \ref{Lt_ordering_interference_from_PP}.

\subsection{Laplace Transform Ordering of Interferences in Heterogeneous Networks}
As unlicensed band utilization increases, the unlicensed wireless network may experience adverse interference from collocated wireless devices that are transmitting in the same unlicensed band. Such a heterogeneous network scenario can be modeled as a superposition of mutually independent point processes \cite{Heath2012}. Let $\Phi_1=\bigcup_{i=1}^M \Phi_{1,i}$ and $\Phi_2=\bigcup_{i=1}^M \Phi_{2,i}$ for $i=1,...,M$ be the heterogeneous networks which are modeled as superpositions of mutually independent point processes. Since the Laplace functional of superposition of mutually independent point processes is $L_{\Phi}(u)=\prod_{i=1}^{M}L_{\Phi_i}(u)$ \cite{Daley2008}, if $\Phi_{1,i} \leq_{\mathrm{Lf}} \Phi_{2,i}$ for $i=1,...,M$, then $\Phi_1 \leq_{\mathrm{Lf}} \Phi_2$. Therefore, $I_{1} \leq_{\mathrm{Lt}} I_{2}$ from Theorem \ref{Lt_ordering_interference_from_PP}.

\section{Numerical Results}
\label{sec:Numerical_Results}
In this section, we verify our theoretical results through Monte Carlo simulations. Since the LT ordering between two interference scenarios cannot be verified directly from its probability distributions such as PDF and CDF, we will verify the LT ordering between interferences by the ordering of SIR-based outage probabilities or ergodic capacities as mentioned in Section \ref{sec:Ordering_of_Perform_Metrics}.

\subsection{Comparison of Fading Channels on the Interference Link}
In many practical scenarios, different links in wireless networks can experience asymmetric fading conditions. If the interferer's channel is Nakagami-$m$ fading, while the desired link is Rayleigh fading which has an effective channel CCDF which is c.m., we can compare SIR-based outage probabilities using using Theorem \ref{suffi_condi_st_Rayleigh}.

In Fig. \ref{fig:different_m_PCP} the CDFs of interference power and SIR from Poisson cluster process with different Nakagami-$m$ fading parameters and with the non-singular path-loss model ($a=1$, $b=1$ and $\delta=4$) in \eqref{eqn:path_loss_model} are shown. We consider two different LoS parameters: $m_1=1$ and $m_2=2$. The choice of these parameters ensures $I_1 \leq_{\mathrm{Lt}} I_2$ from Theorem \ref{Lt_ordering_w_fading_for_GPP}. Consequently, we observe $\emph{SIR}_1 \geq_{\mathrm{st}} \emph{SIR}_2$ in the bottom of Fig. \ref{fig:different_m_PCP} which agrees with Theorem \ref{suffi_condi_st_Rayleigh} even though there is a crossover point between interference power distributions in the top of Fig. \ref{fig:different_m_PCP}.

The CDFs of SIR from Poisson point process with different Nakagami-$m$ fading parameters and with the singular path-loss model ($a=0$, $b=1$ and $\delta=4$) in \eqref{eqn:path_loss_model} are shown in Fig. \ref{fig:different_m_PPP}. Similarly the LoS parameters $m_1 \leq m_2$ lead to $I_1 \leq_{\mathrm{Lt}} I_2$ from Theorem \ref{Lt_ordering_w_fading_for_GPP}. Clearly, it is observed $\emph{SIR}_1 \geq_{\mathrm{st}} \emph{SIR}_2$ in Fig. \ref{fig:different_m_PPP} which agrees with Theorem \ref{suffi_condi_st_Rayleigh}.

Table \ref{table:Ergodic_capacity_PCP} shows the ergodic capacity performances when the desired link has Ricean fading channel with $K_S=5$ and interfering channels follow Ricean distributions with $K_{I_1}=0$ and $K_{I_2}=1$ in the two scenarios compared on a Poisson cluster process. The ergodic capacity with $I_1$ is always better than that with $I_2$ as expected since the interference distributions are monotonic in LoS parameter of Ricean fading channel with respect to the LT ordering in Section \ref{sec:Comparison of Fading_Channels}.

The ergodic capacities in Poisson point process are shown in Table \ref{table:Ergodic_capacity_PPP}. In this case, all conditions are same except for the type of point process. Therefore, the ergodic capacity with $I_1$ is always better than that with $I_2$ when $I_1 \leq_{\mathrm{Lt}} I_2$.

\subsection{Comparison of Path-loss Models}
We show in Fig. \ref{fig:LT_ordering_non_sin_path_loss_models} the CDFs of the interference power and CDFs of SIR from a Poisson point process with the non-singular path-loss models, $g_1(\Vert x \Vert)$ and $g_2(\Vert x \Vert)$ which are given in Fig. \ref{fig:non_sin_path_loss_models} and discussed in Corollary \ref{Lt_ordering_for PPP_theorem}. It is noted the non-singular path-loss models with two different path-loss exponents: $\delta_1=4$ and $\delta_2=8$ ensures $I_1 \geq_{\mathrm{Lt}} I_2$ from Corollary \ref{Lt_ordering_for PPP_theorem}. We consider additional non-singular path-loss, $g_3(\Vert x \Vert)$ whose parameters are $a=1, b=1$, and $\delta=8$ in Fig. \ref{fig:non_sin_path_loss_models}. Since $g_3(\Vert x \Vert) \leq g_2(\Vert x \Vert)$ for $\Vert x \Vert \geq 0$ as shown in Fig. \ref{fig:non_sin_path_loss_models}, it is obvious $I_2 \geq_{\mathrm{Lt}} I_3$. Thus, we observe $\emph{SIR}_1 \leq_{\mathrm{st}} \emph{SIR}_2 \leq_{\mathrm{st}} \emph{SIR}_3$  in the bottom of Fig \ref{fig:LT_ordering_non_sin_path_loss_models}. By Theorem \ref{suffi_condi_st_Rayleigh}, when the fading channel between the desired receiver and its transmitter is Rayleigh distributed and the effective fading channel is exponentially distributed, we can observe the usual stochastic ordering of SIR distributions if the interferences are LT ordered as shown in Fig. \ref{fig:LT_ordering_non_sin_path_loss_models}.

\subsection{Comparison of Different Point Processes}
In the following, we compare the CDFs of interference power and CDFs of SIR from Poisson point process and Poisson cluster process. Since the interferences from Poisson point process and Poisson cluster process with the same intensity $\lambda$, the same non-singular path-loss model $g(\Vert x \Vert)$, and identical fading distribution $h_I$ yield $I_{\text{PCP}} \leq_{\mathrm{Lt}} I_{\text{PPP}}$, which implies $\emph{SIR}_{\text{PCP}} \geq_{\mathrm{st}} \emph{SIR}_{\text{PPP}}$. This is observed in the bottom of Fig. \ref{fig:PPP_vs_PCP_non_sin} as predicted from our theoretical result in Theorem \ref{Lf_PPP_vs_PCP}.

In addition to SIR-based outage performance, it is observed that the ergodic capacity in Poisson cluster processes is always greater than that in Poisson point processes by comparing same rows in Table \ref{table:Ergodic_capacity_PCP} and \ref{table:Ergodic_capacity_PPP}.

\section{Summary}
\label{sec:Summary}
In this paper, we used stochastic orders to compare performance in wireless networks. We showed that when interference is LT ordered, it is possible to order the SIR in the usual stochastic ordering sense when the effective channel has a c.m. CCDF. Similar results hold when the metric is the bandwidth-normalized capacity. This lead to the study of the conditions for LT ordering of interference. Three factors affecting interference are the fading channel from the interfering nodes to the receiver, the path-loss model and the distribution of the interfering node location. We derived conditions on these factors so that LT ordering between interferences holds. In addition, we defined Laplace functional ordering of point processes and derived its inherent stochastic ordering of interferences when the fading channel and the path-loss model are assumed to be same for both point processes. The power of this approach is that such comparisons can be made even in cases where a closed form expression for the interference is not analytically tractable. We verified our results through Monte Carlo simulations.

\appendices
\section*{Appendix A: Proof of Theorem \ref{Lt_ordering_w_fading_for_GPP}}
\label{app:Lt_GPP}
The Laplace transform of aggregated interference $I:=\sum_{x\in\Phi}h_I^{(x)}g(\Vert x \Vert)$ is a Laplace functional \eqref{eqn:Def_Lf_of_PP} evaluated at $u(x)=s h_I g(\Vert x \Vert), s \geq 0, h_I \geq 0$ and $g(\Vert x \Vert) \geq 0$ where $h_I$ is the effective fading channel between the receiver and interferers and $g(\Vert x \Vert)$ is a path-loss model. $\mathcal{L}_{I}(s)$ can be expressed as follows:
\begin{equation}
\label{eqn:Gen_PP_1}
\mathcal{L}_{I}(s)=\mathbb{E}\left[e^{-sI}\right]=\mathbb{E}\left[e^{-sh_I\int_{\mathbb{R}^d}g(\Vert x \Vert)\Phi(\mathrm{d}x)}\right] \text{,}
\end{equation}
where the expectation is to be taken over both $\Phi$ and $h_I$. Let $Z = \int_{\mathbb{R}^d}g(\Vert x \Vert)\Phi(\mathrm{d}x)$ in \eqref{eqn:Gen_PP_1}. From \cite[Theorem 5.A.7 (b)]{Shaked}, if $\mathbb{E}\left[\exp{(-sI_1)}\vert Z=z\right] \geq \mathbb{E}\left[\exp{(-sI_2)}\vert Z=z\right]$ for all $z$ in the support of $Z$, then $I_1 \leq_{\mathrm{Lt}} I_2$. Therefore, it is sufficient to show the following equation regardless of a point process $\Phi$ in order to satisfy the LT ordering between interferences,
\begin{eqnarray}
\label{eqn:Gen_PP_2}
\mathbb{E}_{h_{I_1}}\left[e^{-szh_{I_1}}\vert Z=z\right] \geq \mathbb{E}_{h_{I_2}}\left[e^{-szh_{I_2}}\vert Z=z\right] \text{.}
\end{eqnarray}
But \eqref{eqn:Gen_PP_2} follows from the assumption $h_{I_1} \leq_{\mathrm{Lt}} h_{I_2}$. Thus, we conclude that if $h_{I_1} \leq_{\mathrm{Lt}} h_{I_2}$, then $I_1 \leq_{\mathrm{Lt}} I_2$.

\section*{Appendix B: Proof of Theorem \ref{mean_ordering_with_non_singular_path_loss_for_GPP}}
\label{app:Proof_for_non_sin_path_loss_GPP}
In order to prove $\mathbb{E}[I_{1}] \geq \mathbb{E}[I_{2}]$, using Campbell's theorem in Section \ref{sec:Campbell_Theorem} with $u(x)=h_Ig(\Vert x \Vert)$ where $h_I$ is the (power) fading coefficient and $g(\Vert x \Vert)$ is the non-singular path-loss model in \eqref{eqn:path_loss_model}, we need to show the following:
\begin{equation}
\label{eqn:Proof_for_non_sin_path_loss_GPP_1}
\lambda\int_{\mathbb{R}^d} h_I g_1(\Vert x \Vert)\mathrm{d} x \geq
\lambda\int_{\mathbb{R}^d} h_I g_2(\Vert x \Vert)\mathrm{d} x \text{.}
\end{equation}
Since $h_I$ is independent from the point process and $g(\Vert x \Vert)$ can be expressed as $g(r), r=\Vert x \Vert$ under polar coordinates, after expectation with respect to $h_I$ and change to polar coordinates, the following condition needs to be satisfied to prove Theorem \ref{mean_ordering_with_non_singular_path_loss_for_GPP}:
\begin{equation}
\label{eqn:Proof_for_non_sin_path_loss_GPP_2}
\mathbb{E}\left[h_I\right]\lambda c_d d \int_0^{\infty}g_1(r)r^{d-1} \mathrm{d} r \geq \mathbb{E}\left[h_I\right]\lambda c_d d \int_0^{\infty}g_2(r)r^{d-1} \mathrm{d} r \text{.}
\end{equation}
where $c_d$ is the volume of the $d$-dimensional unit ball and $h_I$ is the (power) fading coefficient between the receiver and interferers. Using a change of variables, we get
\begin{equation}
\label{eqn:Proof_for_non_sin_path_loss_GPP_3}
\int_{g_1(\infty)}^{g_1(0)}u\left(g_1^{-1}(u)\right)^{d-1}\frac{\partial}{\partial u}\left(g_{1}^{-1}(u)\right)\mathrm{d} u \geq \int_{g_2(\infty)}^{g_2(0)}u\left(g_2^{-1}(u)\right)^{d-1}\frac{\partial}{\partial u}\left(g_{2}^{-1}(u)\right)\mathrm{d} u \text{.}
\end{equation}
where $d$ is the dimension of point process, $g^{-1}(\cdot)$ is the inverse function of $g(\cdot)$ and $I[u \in S]=1$, if $u \in S$, and $0$ otherwise, is the indicator function. Substituting the non-singular path-loss models $g_1(r)$ and $g_2(r)$ with $a=1$ and $b=1$ in \eqref{eqn:path_loss_model} into \eqref{eqn:Proof_for_non_sin_path_loss_GPP_3}, we get
\begin{equation}
\label{eqn:Proof_for_non_sin_path_loss_GPP_4}
\int_{0}^{1}u\left( \frac{\left(\frac{1}{u}-1\right)^{\frac{d}{\delta_1}-1}}{\delta_1 u^2}-\frac{\left(\frac{1}{u}-1\right)^{\frac{d}{\delta_2}-1}}{\delta_2 u^2} \right) \mathrm{d} u
= \frac{\Gamma\left(1-\frac{d}{\delta_1}\right)\Gamma\left(\frac{d}{\delta_1}\right)}{\delta_1}
-\frac{\Gamma\left(1-\frac{d}{\delta_2}\right)\Gamma\left(\frac{d}{\delta_2}\right)}{\delta_2} \geq 0 \text{,}
\end{equation}
since $\frac{\Gamma\left(1-\frac{d}{\delta}\right)\Gamma\left(\frac{d}{\delta}\right)}{\delta}$ is a decreasing function with $\delta$ for a fixed $d$ and $\delta_1 \leq \delta_2$. The proof for Theorem \ref{mean_ordering_with_non_singular_path_loss_for_GPP} is complete.

\section*{Appendix C: Proof of Theorem \ref{Lt_ordering_with_non_singular_path_loss_for_PPP}}
\label{app:Proof_for_non_sin_path_loss_PPP}
The Laplace transform of interference power in a stationary Poisson point process with path-loss model $g_j(r), j=1,2$ in \eqref{eqn:path_loss_model} and intensity $\lambda$ can be expressed as follows:
\begin{equation}
\label{eqn:proof_suffi_condi_path_loss_PPP_1}
\mathcal{L}_{I_j}(s) = \exp\left\{-\lambda c_d d \int_0^{\infty}\left[1-\exp\left(-sh_Ig_j(r)\right)\right]r^{d-1}\mathrm{d} r\right\}
\end{equation}
where $c_d$ is the volume of the $d$-dimensional unit ball and $h_I$ is the (power) fading coefficient between the receiver and interferers. From \cite[Theorem 5.A.7 (b)]{Shaked}, it is sufficient to show $\mathcal{L}_{I_1}(s) \geq \mathcal{L}_{I_2}(s)$ in \eqref{eqn:proof_suffi_condi_path_loss_PPP_1} regardless of a distribution of $h_I$ in order to satisfy the LT ordering between interferences. To do so, the following condition needs to be satisfied after change of variables:
\begin{eqnarray}
\nonumber
\label{eqn:proof_suffi_condi_path_loss_PPP_2}
\lefteqn{\int_{g_1(\infty)}^{g_1(0)}\left[1-\exp\left(-sh_Iu\right)\right]\left(g_1^{-1}(u)\right)^{d-1}\frac{\partial}{\partial u}\left(g_{1}^{-1}(u)\right)\mathrm{d} u} \\
&\geq& \int_{g_2(\infty)}^{g_2(0)}\left[1-\exp\left(-sh_Iu\right)\right]\left(g_2^{-1}(u)\right)^{d-1}\frac{\partial}{\partial u}\left(g_{2}^{-1}(u)\right)\mathrm{d} u \text{.}
\end{eqnarray}
Substituting the same non-singular path-loss models into \eqref{eqn:proof_suffi_condi_path_loss_PPP_2}, it follows
\begin{eqnarray}
\label{eqn:proof_suffi_condi_path_loss_PPP_3}
\lefteqn{\int_{0}^{1}\left(1-\exp\left(-sh_Iu\right)\right)\left( \frac{\left(\frac{1}{u}-1\right)^{\frac{d}{\delta_1}-1}}{\delta_1 u^2}-\frac{\left(\frac{1}{u}-1\right)^{\frac{d}{\delta_2}-1}}{\delta_2 u^2} \right) \mathrm{d} u } \\
\label{eqn:proof_suffi_condi_path_loss_PPP_4}
&\geq& \left(1-\exp\left(-sh_I\right)\right)\int_{0}^{1}u\left( \frac{\left(\frac{1}{u}-1\right)^{\frac{d}{\delta_1}-1}}{\delta_1 u^2}-\frac{\left(\frac{1}{u}-1\right)^{\frac{d}{\delta_2}-1}}{\delta_2 u^2} \right) \mathrm{d} u \\
\label{eqn:proof_suffi_condi_path_loss_PPP_5}
&=& \left(1-\exp\left(-sh_I\right)\right)\underbrace{\left(\frac{\Gamma\left(1-\frac{d}{\delta_1}\right)\Gamma\left(\frac{d}{\delta_1}\right)}{\delta_1}
-\frac{\Gamma\left(1-\frac{d}{\delta_2}\right)\Gamma\left(\frac{d}{\delta_2}\right)}{\delta_2}\right)}_{A} \geq 0 \text{.}
\end{eqnarray}
\eqref{eqn:proof_suffi_condi_path_loss_PPP_4} follows from $(1-\exp(-c))u \leq 1-\exp(-cu)$ for $c \geq 0$ and $0 \leq u \leq 1$ and \eqref{eqn:proof_suffi_condi_path_loss_PPP_5} follows from $1-\exp(-sh_I) \geq 0$ for $sh_I \geq 0$ and $A \geq 0$ from \eqref{eqn:Proof_for_non_sin_path_loss_GPP_4}. Theorem \ref{Lt_ordering_with_non_singular_path_loss_for_PPP} is proved.

\section*{Appendix D: Proof of Theorem \ref{Lf_PPP_vs_PCP}}
\label{app:Lf_PPP_PCP}
Let $u(\cdot) \in \mathscr{U}$ be all non-negative functions on $\mathbb{R}^d$. The Laplace functional of Neyman-Scott process with Poisson distributed number of daughter points and with the distribution $f(\cdot)$ for locations of daughter points can be expressed as follows\cite{Stoyan1995, Haenggi2008}:
\begin{eqnarray}
\nonumber
L_{\Phi_{\text{PCP}}}(u)
\nonumber
&=&\exp\biggl\{-\lambda_p\int_{\mathbb{R}^d}\biggl[ 1-\exp\biggl(-\bar{c}\biggl(1-\int_{\mathbb{R}^d}\exp(-u(x+y))f(y)\mathrm{d} y \biggr)\biggr)\biggr]\mathrm{d} x \biggr\} \\
\label{eqn:PPP_vs_PCP_1}
&\geq&\exp\biggl\{-\lambda_p\bar{c}\int_{\mathbb{R}^d}\biggl[1-\int_{\mathbb{R}^d}\exp(-u(x+y))f(y)\mathrm{d} y\biggr]\mathrm{d} x\biggr\} \\
\label{eqn:PPP_vs_PCP_2}
&=&\exp\left\{-\lambda\int_{\mathbb{R}^d}\left[1-\exp(-u(x))\right]\mathrm{d} x\right\} \\
\nonumber
&=&L_{\Phi_{\text{PPP}}}(u)
\end{eqnarray}
where the inequality in \eqref{eqn:PPP_vs_PCP_1} follows from the fact that $1-\exp(-ax) \leq ax, \text{ } a \geq 0$ and \eqref{eqn:PPP_vs_PCP_2} follows from change of variables, interchanging integrals and using $\int f(y) \mathrm{d}y=1$ \cite{Haenggi2009b}.

\section*{Appendix E: Proof of Theorem \ref{Lf_PPP_vs_MPP}}
\label{app:Lf_PPP_MPP}
Let $u(\cdot) \in \mathscr{U}$ be all non-negative functions on $\mathbb{R}^d$. The Laplace functional of the mixed Poisson process process with a random intensity measure $X$ which has the averaged intensity measure $\mathbb{E}_{X}[X]=\lambda$ can be expressed as follows \cite{Griffiths1978}:
\begin{eqnarray}
\label{eqn:PPP_vs_MPP_1}
L_{\Phi_{\text{MPP}}}(u)&=&\mathbb{E}_{X}\left[\exp\left\{-X\int_{\mathbb{R}^d}\left[1-\exp(-u(x))\right]\mathrm{d} x\right\}\right] \\
\label{eqn:PPP_vs_MPP_2}
&\geq& \exp\left\{-\mathbb{E}_{X}[X]\int_{\mathbb{R}^d}\left[1-\exp(-u(x))\right]\mathrm{d} x\right\} \\
&=&\exp\left\{-\lambda\int_{\mathbb{R}^d}\left[1-\exp(-u(x))\right]\mathrm{d} x\right\} \\
\nonumber
&=&L_{\Phi_{\text{PPP}}}(u)
\end{eqnarray}
where the inequality in \eqref{eqn:PPP_vs_MPP_2} follows from Jensen's inequality since the term inside the brackets in \eqref{eqn:PPP_vs_MPP_1} is a convex function of $X$.
\section*{Appendix F: Proof of Theorem \ref{Lf_PPP_vs_BPP}}
\label{app:Lf_PPP_BPP}
Let $u(\cdot) \in \mathscr{U}$ be all non-negative functions on $\mathbb{R}^d$. The Laplace functional of the binomial point process consisting $N$ points with a density $\lambda$ can be expressed as follows \cite{Srinivasa2007}:
\begin{eqnarray}
\label{eqn:PPP_vs_BPP_1}
L_{\Phi_{\text{BPP}}}(u)&=&\left(1-\frac{\lambda}{N}\int_{B_0(r)}\left[1-\exp(-u(x))\right]\mathrm{d} r\right)^{N} \\
\label{eqn:PPP_vs_BPP_2}
&\leq& \exp\left\{-\lambda\int_{B_0(r)}\left[1-\exp(-u(x))\right]\mathrm{d} x\right\} \\
\nonumber
&=&L_{\Phi_{\text{PPP}}(r)}(u)
\end{eqnarray}
where the inequality in \eqref{eqn:PPP_vs_BPP_2} is due to $(1-c/n)^n \leq e^{-c}$ for $0 \leq c \leq n$. Thus, Theorem \ref{Lf_PPP_vs_BPP} is followed whenever $0 \leq \lambda\int_{B_0(r)}\left[1-\exp(-u(x))\right]\mathrm{d} x \leq N$ holds.

\bibliographystyle{IEEEtran}
\nocite{*}
\bibliography{references}

\begin{thebibliography}{10}
\providecommand{\url}[1]{#1}
\csname url@samestyle\endcsname
\providecommand{\newblock}{\relax}
\providecommand{\bibinfo}[2]{#2}
\providecommand{\BIBentrySTDinterwordspacing}{\spaceskip=0pt\relax}
\providecommand{\BIBentryALTinterwordstretchfactor}{4}
\providecommand{\BIBentryALTinterwordspacing}{\spaceskip=\fontdimen2\font plus
\BIBentryALTinterwordstretchfactor\fontdimen3\font minus
  \fontdimen4\font\relax}
\providecommand{\BIBforeignlanguage}[2]{{%
\expandafter\ifx\csname l@#1\endcsname\relax
\typeout{** WARNING: IEEEtran.bst: No hyphenation pattern has been}%
\typeout{** loaded for the language `#1'. Using the pattern for}%
\typeout{** the default language instead.}%
\else
\language=\csname l@#1\endcsname
\fi
#2}}
\providecommand{\BIBdecl}{\relax}
\BIBdecl

\bibitem{Silvester1983}
J.~Silvester and L.~Kleinrock, ``\ts{O}n the capacity of multihop slotted
  \ts{ALOHA} networks with regular structure,'' \emph{IEEE Trans. Commun.},
  vol.~31, no.~8, pp. 974--982, Aug. 1983.

\bibitem{Mathar1995}
R.~Mathar and J.~Mattfeldt, ``\ts{O}n the distribution of cumulated
  interference power in \ts{R}ayleigh fading channels,'' \emph{Wireless Netw.},
  vol.~1, no.~1, pp. 31--36, Feb. 1995.

\bibitem{Stoyan1995}
D.~Stoyan, W.~S. Kendall, and J.~Mecke, \emph{Stochastic Geometry and its
  Applications}, 2nd~ed.\hskip 1em plus 0.5em minus 0.4em\relax New York:
  Wiley, 1995.

\bibitem{Baccelli2009a}
F.~Baccelli and B.~B\l{}aszczyszyn, \emph{\ts{S}tochastic geometry and wireless
  networks, Volume 1-Theory}.\hskip 1em plus 0.5em minus 0.4em\relax New York:
  NOW: Foundations and Trends in Networking, 2009.

\bibitem{Haenggi2008}
M.~Haenggi and R.~Ganti, ``\ts{I}nterference in large wireless networks,''
  \emph{Foundations and Trends in Networking}, vol.~3, no.~2, pp. 127--248,
  2008.

\bibitem{Ilow1998}
J.~Ilow and D.~Hatzinakos, ``\ts{A}nalytic alpha-stable noise modeling in a
  \ts{P}oisson field of interferers or scatterers,'' \emph{IEEE Trans. Signal
  Process.}, vol.~46, no.~6, pp. 1601--1611, Jun. 1998.

\bibitem{Win2009}
M.~Win, P.~Pinto, and L.~Shepp, ``\ts{A} mathematical theory of network
  interference and its applications,'' \emph{Proc. IEEE}, vol.~97, no.~2, pp.
  205--230, Feb. 2009.

\bibitem{Linnartz1992}
J.-P. Linnartz, ``\ts{E}xact analysis of the outage probability in
  multiple-user mobile radio,'' \emph{IEEE Trans. Inf. Theory}, vol.~40, no.~1,
  pp. 20--23, Jan. 1992.

\bibitem{Baccelli2006}
F.~Baccelli and P.~Blaszczyszyn, B.and~Muhlethaler, ``\ts{A}n \ts{A}loha
  protocol for multihop mobile wireless networks,'' \emph{IEEE Trans. Inf.
  Theory}, vol.~52, no.~2, pp. 421--436, Feb. 2006.

\bibitem{Shaked}
M.~Shaked and J.~Shanthikumar, \emph{Stochastic orders}.\hskip 1em plus 0.5em
  minus 0.4em\relax Springer, 2007.

\bibitem{Shaked2}
------, \emph{Stochastic orders and their applications}, 1st~ed.\hskip 1em plus
  0.5em minus 0.4em\relax Springer, 1994.

\bibitem{Tepedelenlioglu2011}
C.~Tepedelenlio\u{g}lu, A.~Rajan, and Y.~Zhang, ``\ts{A}pplications of
  stochastic ordering to wireless communications,'' \emph{IEEE Trans. Wireless
  Commun.}, vol.~10, no.~12, pp. 4249--4257, Dec. 2011.

\bibitem{Ross1978}
S.~Ross, ``\ts{A}verage delay in queues with non-stationary \ts{P}oisson
  arrivals,'' \emph{J. Appl. Prob.}, vol.~15, no.~3, pp. 602--609, Sept. 1978.

\bibitem{Blaszczyszyn2009}
B.~Blaszczyszyn and D.~Yogeshwaran, ``\ts{D}irectionally convex ordering of
  random measures, shot noise fields, and some applications to wireless
  communications,'' \emph{Adv. Appl. Prob.}, vol.~41, no.~3, pp. 623--646,
  2009.

\bibitem{Karr1991}
A.~F. Karr, \emph{Point Processes and Their Statistical Inference},
  2nd~ed.\hskip 1em plus 0.5em minus 0.4em\relax New York: Marcel Dekker, Inc.,
  1991.

\bibitem{Haenggi2009b}
R.~Ganti and M.~Haenggi, ``\ts{I}nterference and outage in clustered wireless
  \ts{A}d \ts{H}oc networks,'' \emph{IEEE Trans. Inf. Theory}, vol.~55, no.~9,
  pp. 4067--4086, Sep. 2009.

\bibitem{Tresch2010}
R.~Tresch and M.~Guillaud, ``\ts{P}erformance of interference alignment in
  clustered wireless ad hoc networks,'' in \emph{Proc. IEEE ISIT'10}, Jun.
  2010, pp. 1703--1707.

\bibitem{Gulati2010}
K.~Gulati, B.~Evans, J.~Andrews, and K.~Tinsley, ``\ts{S}tatistics of
  co-channel interference in a field of \ts{P}oisson and
  \ts{P}oisson-\ts{P}oisson clustered interferers,'' \emph{IEEE Trans. Signal
  Process.}, vol.~58, no.~12, pp. 6207--6222, Dec. 2010.

\bibitem{Haenggi2012}
C.-H. Lee and M.~Haenggi, ``\ts{I}nterference and outage in \ts{P}oisson
  cognitive networks,'' \emph{IEEE Trans. Wireless Commun.}, vol.~11, no.~4,
  pp. 1392--1401, Apr. 2012.

\bibitem{Haas2003}
S.~Haas and J.~Shapiro, ``\ts{C}apacity of wireless optical communications,''
  \emph{IEEE J. Sel. Areas Commun.}, vol.~21, no.~8, pp. 1346--1357, Oct. 2003.

\bibitem{Garetto2011}
M.~Garetto, A.~Nordio, C.~Chiasserini, and E.~Leonardi,
  ``\ts{I}nformation-theoretic capacity of clustered random networks,''
  \emph{IEEE Trans. Inf. Theory}, vol.~57, no.~11, pp. 7578--7596, Nov. 2011.

\bibitem{Haenggi2009a}
M.~Haenggi, J.~Andrews, F.~Baccelli, O.~Dousse, and M.~Franceschetti,
  ``\ts{S}tochastic geometry and random graphs for the analysis and design of
  wireless networks,'' \emph{IEEE J. Sel. Areas Commun.}, vol.~27, no.~7, pp.
  1029--1046, Sep. 2009.

\bibitem{Shang2009}
X.~Shang, G.~Kramer, and B.~Chen, ``\ts{A} new outer bound and the
  noisy-interference sum-rate capacity for \ts{G}aussian interference
  channels,'' \emph{IEEE Trans. Inf. Theory}, vol.~55, no.~2, pp. 689--699,
  Feb. 2009.

\bibitem{Salbaroli2009}
E.~Salbaroli and A.~Zanella, ``\ts{I}nterference analysis in a \ts{P}oisson
  field of nodes of finite area,'' \emph{IEEE Trans. Veh. Technol.}, vol.~58,
  no.~4, pp. 1776--1783, May 2009.

\bibitem{Hasan2007}
A.~Hasan and J.~G. Andrews, ``\ts{T}he guard zone in wireless ad hoc
  networks,'' \emph{IEEE Trans. Wireless Commun.}, vol.~4, no.~3, pp. 897--906,
  Mar. 2007.

\bibitem{Baccelli2009b}
F.~Baccelli and B.~B\l{}aszczyszyn, \emph{\ts{S}tochastic geometry and wireless
  networks, Volume 2-Applications}.\hskip 1em plus 0.5em minus 0.4em\relax New
  York: NOW: Foundations and Trends in Networking, 2009.

\bibitem{Simon}
M.~Simon and M.~Alouini, \emph{Digital Communication over Fading
  Channels}.\hskip 1em plus 0.5em minus 0.4em\relax Wiley-IEEE Press, 2000.

\bibitem{Srinivasa2007}
S.~Srinivasa and M.~Haenggi, ``\ts{M}odeling interference in finite uniformly
  random networks,'' in \emph{Proc. WITS'07}, Jun. 2007.

\bibitem{Heath2012}
R.~Heath and M.~Kountouris, ``\ts{M}odeling heterogeneous network
  interference,'' in \emph{Proc. IEEE ITA'12}, Feb. 2012, pp. 17--22.

\bibitem{Daley2008}
D.~J. Daley and D.~Vere-Jones, \emph{An Introduction to the Theory of Point
  Processes, Volume II: General Theory and Structure}, 2nd~ed.\hskip 1em plus
  0.5em minus 0.4em\relax New York: Springer, 2008.

\bibitem{Griffiths1978}
R.~Griffiths and R.~Milne, ``\ts{A} class of bivariate \ts{P}oisson
  processes,'' \emph{J. Mult. Anal.}, vol.~8, no.~3, pp. 380--395, Sept. 1978.

\end{thebibliography}
\begin{table}[ht]
\caption{\normalfont{Ergodic capacities (bits/s/Hz) over Ricean fading channel with $K_S=5$ in Poisson cluster process}}
\vspace{-0.5cm}
\centering
\renewcommand{\arraystretch}{1.0}
\scalebox{1.0}{
\begin{tabular}{c|c|c|c|c|c|c|c|c}
\hline
SINR (dB)       & -4     & -2     & 0      & 2      & 4      & 6      & 8      & 10     \\ [0.5ex]
\hline\hline
$I_1$ ($K_{I_1}=0$) & 0.9433 & 1.2282 & 1.5709 & 1.9820 & 2.4538 & 3.0063 & 3.6482 & 4.5313 \\
\hline
$I_2$ ($K_{I_2}=1$) & 0.9426 & 1.2277 & 1.5707 & 1.9803 & 2.4526 & 3.0020 & 3.6449 & 4.5278 \\
\hline
\end{tabular}}
\label{table:Ergodic_capacity_PCP}
\end{table}

\begin{table}[ht]
\caption{\normalfont{Ergodic capacities (bits/s/Hz) over Ricean fading channel with $K_S=5$ in Poisson point process}}
\vspace{-0.5cm}
\centering
\renewcommand{\arraystretch}{1.0}
\scalebox{1.0}{
\begin{tabular}{c|c|c|c|c|c|c|c|c}
\hline
SINR (dB)       & -4     & -2     & 0      & 2      & 4      & 6      & 8      & 10     \\ [0.5ex]
\hline\hline
$I_1$ ($K_{I_1}=0$) & 0.9346 & 1.2162 & 1.5485 & 1.9501 & 2.3847 & 2.9295 & 3.5491 & 4.3477 \\
\hline
$I_2$ ($K_{I_2}=1$) & 0.9342 & 1.2152 & 1.5468 & 1.9460 & 2.3816 & 2.9231 & 3.5407 & 4.3349 \\
\hline
\end{tabular}}
\label{table:Ergodic_capacity_PPP}
\end{table}

\begin{figure}[tb]
\begin{minipage}{1\textwidth}
\centering
\begin{center}
\includegraphics[height=8.5cm,keepaspectratio]{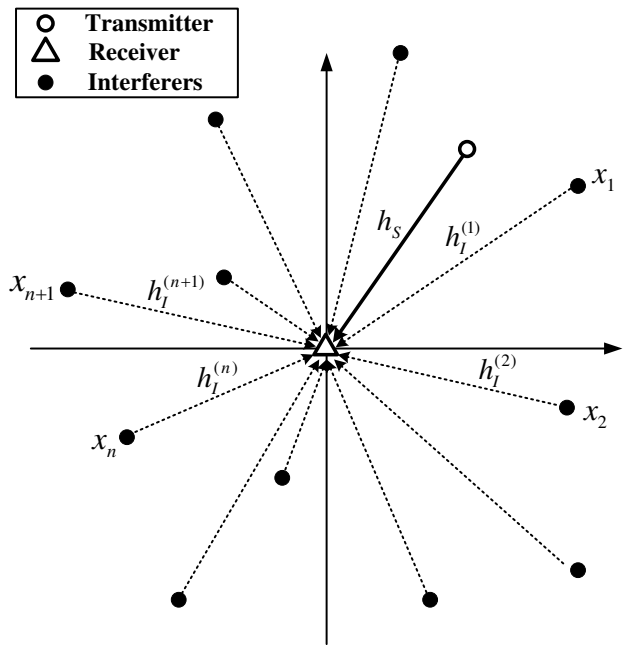}
\caption{Illustration of a wireless network. The black dots represent interfering nodes which form a point process $\Phi$ and the dotted lines represent their interfering signals. The white dot and the triangle at the origin are the desired transmit/receiver pair which are not part of the point process.}\label{fig:system_model}
\end{center}
\end{minipage}
\end{figure}

\begin{figure}[tb]
\begin{minipage}{1\textwidth}
\centering
\begin{center}
\includegraphics[height=8.5cm,keepaspectratio]{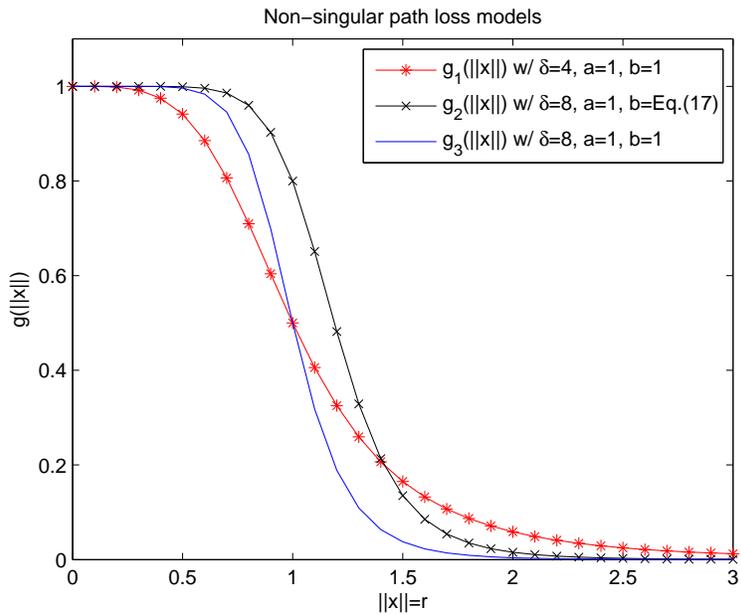}
\caption{Non-singular path-loss models with different path-loss exponents, $\delta_1$ and $\delta_2$}\label{fig:non_sin_path_loss_models}
\end{center}
\end{minipage}
\end{figure}

\begin{figure}[tb]
\begin{minipage}{1\textwidth}
\centering
\begin{center}
\includegraphics[height=8.5cm,keepaspectratio]{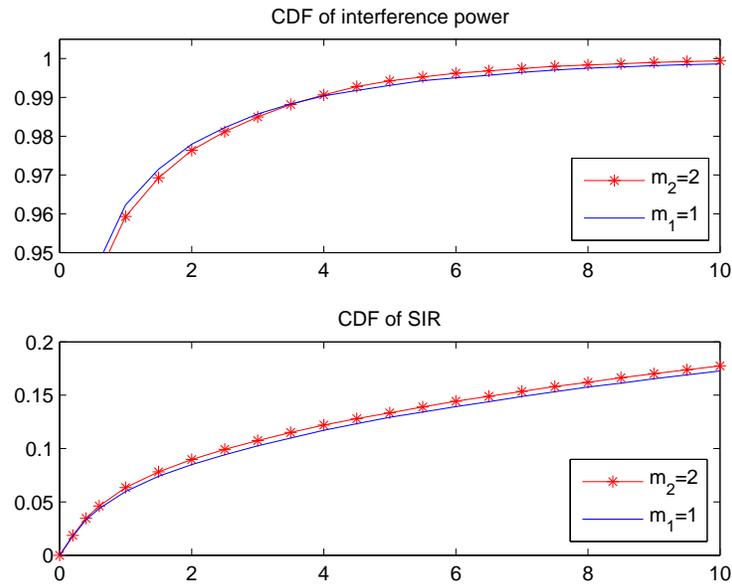}
\caption{CDFs of interference and SIR for Poisson cluster process with different fading parameters and with $\lambda=0.01$}\label{fig:different_m_PCP}
\end{center}
\end{minipage}
\end{figure}

\begin{figure}[tb]
\begin{minipage}{1\textwidth}
\centering
\begin{center}
\includegraphics[height=8.5cm,keepaspectratio]{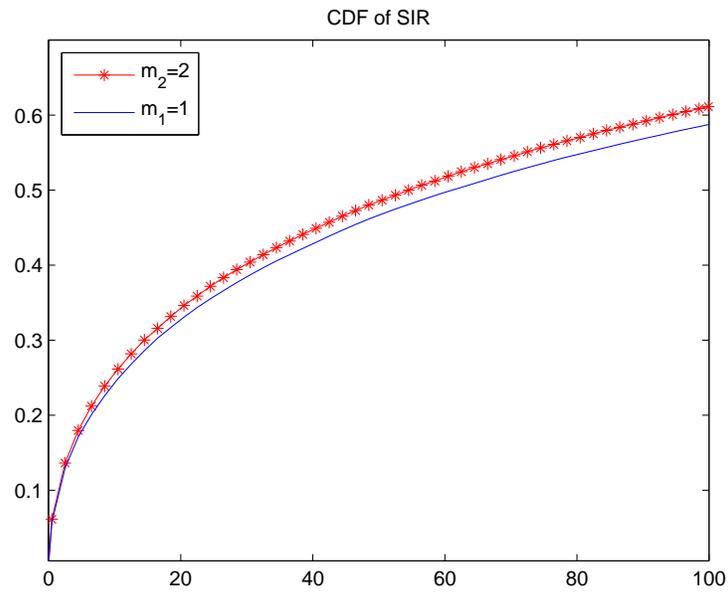}
\caption{CDFs of SIR for Poisson point process with different fading parameters and with $\lambda=0.01$}\label{fig:different_m_PPP}
\end{center}
\end{minipage}
\end{figure}

\begin{figure}[tb]
\begin{minipage}{1\textwidth}
\centering
\begin{center}
\includegraphics[height=8.5cm,keepaspectratio]{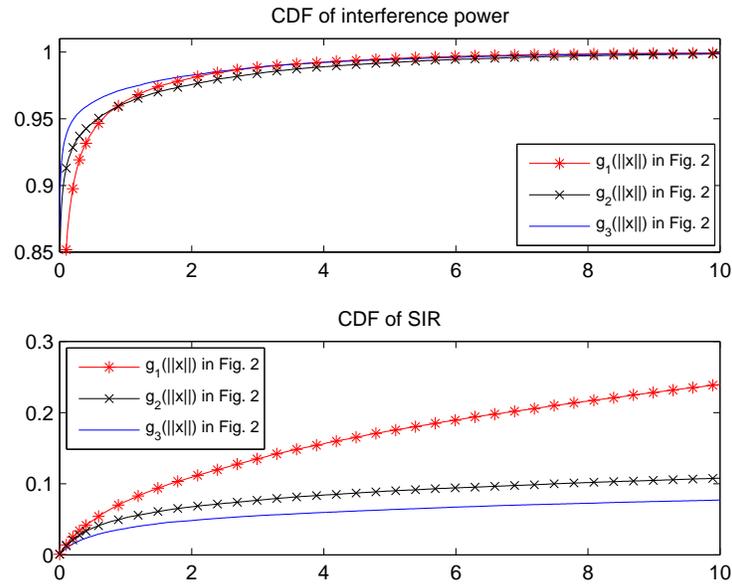}
\caption{CDFs of interference and SIR for Poisson point process with non-singular path-loss models with different path-loss exponents and with $\lambda=0.01$}\label{fig:LT_ordering_non_sin_path_loss_models}
\end{center}
\end{minipage}
\end{figure}

\begin{figure}[tb]
\begin{minipage}{1\textwidth}
\centering
\begin{center}
\includegraphics[height=8.5cm,keepaspectratio]{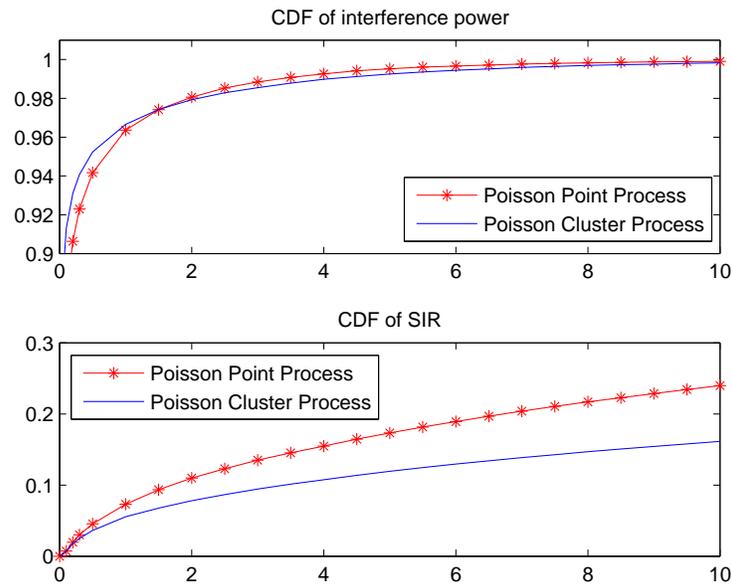}
\caption{CDFs of interference and SIR for Poisson point process and Poisson cluster process with $\lambda=0.01$}\label{fig:PPP_vs_PCP_non_sin}
\end{center}
\end{minipage}
\end{figure}

\end{document}